\documentclass[%
 reprint,
 amsmath,amssymb,
 aps,
]{revtex4-2}

\usepackage{graphicx}
\usepackage{dcolumn}
\usepackage{bm}
\usepackage{hyperref}
\usepackage{amsthm}
\newtheorem{proposition}{Proposition}

\usepackage{settings_custom}
\newcommand{\LAS}{{\mathcal{LAS}}} 
\newcommand{\IGP}{{\mathcal{IGP}}} 
\newcommand{\ain}{{\alpha_{\rm in}}}
\newcommand{\aout}{{\alpha_{\rm out}}}
\newcommand{\kr}{{k_{\rm r}}}
\newcommand{\kc}{{k_{\rm c}}}
\newcommand{\Nr}{{N_{\rm r}}}
\newcommand{\Nc}{{N_{\rm c}}}
\newcommand{\mr}{{m_{\rm r}}}
\newcommand{\mc}{{m_{\rm c}}}
\newcommand{\hr}{{h_{\rm r}}}
\newcommand{\hc}{{h_{\rm c}}}
\newcommand{\qr}{{q_{\rm r}}}
\newcommand{\qc}{{q_{\rm c}}}
\newcommand{\hmr}{{\hat{m}_{\rm r}}}
\newcommand{\hmc}{{\hat{m}_{\rm c}}}
\newcommand{\hqr}{{\hat{q}_{\rm r}}}
\newcommand{\hqc}{{\hat{q}_{\rm c}}}

\usepackage{tikz}

\begin{document}

\title{Statistical mechanics of the maximum-average submatrix problem}

\author{Vittorio Erba$^1$}
\author{Florent Krzakala$^2$}%
\author{Rodrigo Pérez$^2$}%
\author{Lenka Zdeborov\'a$^1$}
\affiliation{%
$^1$Statistical Physics of Computation Laboratory, \\
$^2$Information, Learning and Physics Laboratory, \\
École polytechnique fédérale de Lausanne (EPFL)
CH-1015 Lausanne
}%

\date{\today}

\begin{abstract}
We study the maximum-average submatrix problem, in which given an $N \times N$ matrix $J$ one needs to find the $k \times k$ submatrix with the largest average of entries.
We study the problem for random matrices $J$ whose entries are i.i.d. random variables 
by mapping it to a variant of the Sherrington-Kirkpatrick spin-glass model at fixed magnetization.
We characterize analytically the phase diagram of the model as a function of the submatrix average and the size of the submatrix $k$ in the limit $N\to\infty$.
We consider submatrices of size $k = m N$ with $0 < m < 1$.
We find a rich phase diagram, including dynamical, static one-step replica symmetry breaking and full-step replica symmetry breaking. 
In the limit of $m \to 0$, we find a simpler phase diagram featuring a frozen 1-RSB phase, where the Gibbs measure is composed of exponentially many pure states each with zero entropy.
We discover an interesting phenomenon, reminiscent of the phenomenology of the binary perceptron: there exist efficient algorithms that provably work in the frozen 1-RSB phase.
\end{abstract}

\maketitle

We consider the maximum-average submatrix (MAS) problem, i.e. the problem of finding the $k \times k$ submatrix of an $N \times N$ matrix $J$ with the largest average of entries.
This is a natural combinatorial optimization problem that has been studied in the mathematical and data science literature \cite{LAS}, mainly in the context of biclustering~\cite{biclustering_survey}. Theoretical works focused on the case where $J$ is a random matrix ($i.i.d.$ standard Gaussian entries), the size of $J$ is large ($N \to \infty$) and the size of the submatrix $k \ll N$ \cite{Bhamidi2017,Gamarnik2018,sun2013maximal}.

From a statistical physics point of view, the MAS problem is a natural variant of the well-known Sherrington-Kirkpatrick model \cite{sherrington1975solvable} with spins $\sigma_i \in \{0,1\}$ 
at fixed magnetization. Statistical physics of disordered systems and the related replica method \cite{mezard1987spin} have been used widely to study other combinatorial optimization problems such as 
graph partitioning \cite{fu1986application}, matching \cite{mezard1986mean}, graph colouring \cite{zdeborova2007phase}, $K$-satisfiability of Boolean formulas \cite{mezard2002analytic}, and many others. As far as we are aware, the maximum-average submatrix problem has not been studied from the statistical physics point of view. Filling this gap is the main purpose of the present paper.

\begin{figure*}
  \centering
  \includegraphics[width=2\columnwidth]{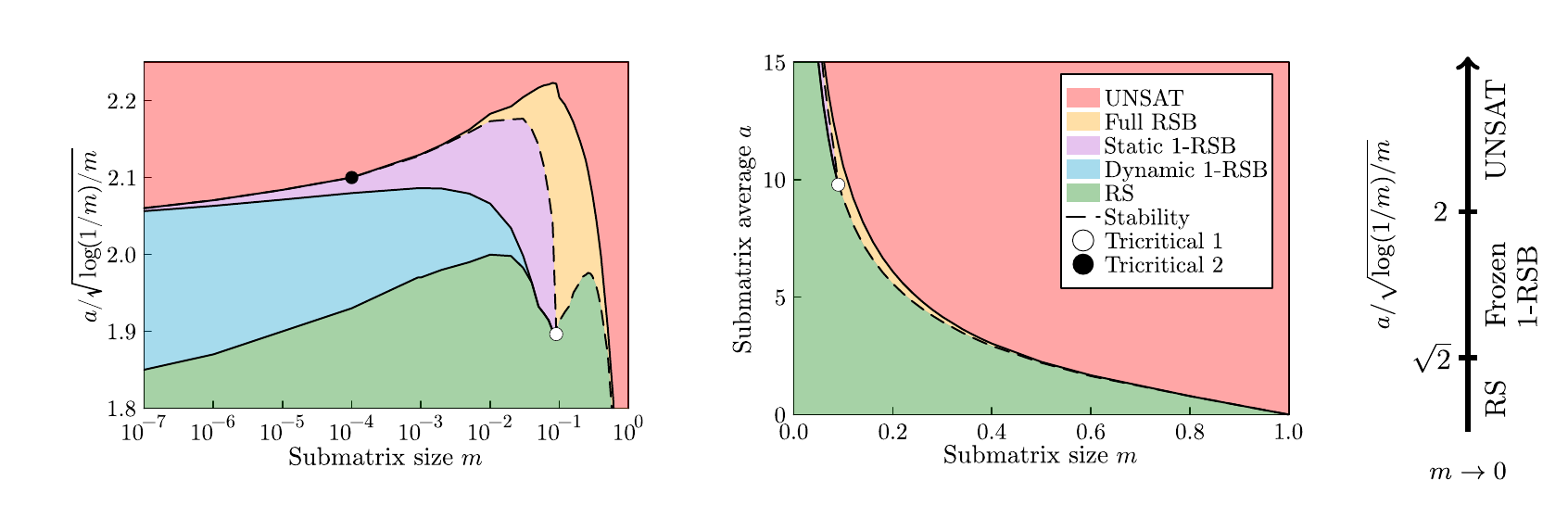}

  \caption{
    The phase diagram of the MAS problem as a function of the submatrix-average $a$ and the submatrix size $m = k/N$ for linear scale in $m$ (left), logarithmic scale in $m$ (center) and $m \to 0$ (right).
    In the central and right panel we rescale the sub-matrix average as $a / \sqrt{\log(1/m)/(m)}$ to highlight the convergence to the limit. 
    We identify five distinct phases. 
    In the RS phase (green) the system is replica symmetric.
    In the 1-RSB phases replica symmetry is broken to one step and two sub-phases exists, a dynamical 1-RSB with an extensive number of equilibrium pure states (blue) and a static 1-RSB with only finitely many pure state (purple). All the phase boundaries 
    are exact in the thermodynamic limit except the boundary between full-RSB (orange) and UNSAT (red) which would require solving the full-RSB equations. 
    In the full-RSB phase (orange) replica symmetry is completely broken and the set of pure states manifests ultrametricity.
    In the unsatisfiable phase (UNSAT, red) no submatrix exists with the given values of $a$ and $m$.
    The transition from RS and 1-RSB to full-RSB is continuous and caused by an instability of the 1-RSB ansatz (dashed line), while the other transitions are discontinuous. 
    We observe two tricritical points, one at $(m_c, a_c)$ where the system shows coexistence of RS, 1-RSB and full-RSB phase (white marker), and one at $(m^*, a^*)$ where the largest-average submatrices become 1-RSB stable and the full-RSB region ceases to exist (black marker).
    In the limit $m\to 0$, we observe only the RS, 1-RSB and UNSAT phases. The 1-RSB phase is frozen, meaning that the internal entropy of each pure state goes to zero in the $m\to 0$ limit.
  }
  \label{fig.ph}
\end{figure*}

Our results reveal the exact phase diagram of the MAS problem when $k=mN$, $N$ is large and $m$ finite. We unveil that at large values of $m$ as the submatrix average increases the system undergoes a continuous phase transition to a full replica symmetry breaking (RSB) phase \cite{mezard1987spin}. At intermediate values of $m$ the phase transition becomes discontinuous, passing through a dynamical one-step RSB and static one-step RSB phases to a full-RSB one \cite{mezard1987spin}. At yet lower $m$, the full-RSB phase then vanishes and the maximum average is given by the one-step RSB solution. In the limit of $m\to 0$, the MAS problem behaves in a way related to the random energy 
model
\cite{derrida1981random} presenting frozen one-step RSB \cite{krauth1989storage,martin2004frozen}.

We also find that in the limit $m\to 0$ the phase diagram presents a region where polynomial algorithms are proven to work \cite{Gamarnik2018} yet according to our results the equilibrium behaviour of the problem is given by the frozen one-step RSB phase that is considered algorithmically hard \cite{gamarnik2022disordered}. One other such problem is known in the literature -- the binary perceptron. For the binary perceptron, an explanation of the discrepancy between equilibrium properties and algorithmic feasibility has been
proposed in relation to out-of-equilibrium large-local-entropy regions of the phase space that are not described within the standard replica solution \cite{baldassi2015subdominant}. This finding has been used to discuss learning in artificial neural networks \cite{baldassi2016unreasonable} and to propose new algorithms \cite{chaudhari2019entropy}. The analogy of behaviour between the binary perceptron and the MAS problem is therefore interesting as it may serve to shed more light on the fundamental question of 
algorithmic hardness. 
From the point of view of the mathematics of spin glasses, the perceptron problem is difficult to handle due to the effective bipartite structure of the correlations. The MAS problem
belongs instead to a class of problems for which the exactness of the replica calculation has been established rigorously in \cite{Panchenko}.

We now review the mathematical results which we later connect to our analysis. 
All these results hold in the regime $k \ll N$. 
In \cite{Bhamidi2017}, the authors proved that the globally optimal submatrix has an average equal to $A_{\rm opt} = 2\sqrt{\log N / k}$ ($\log$ is the natural logarithm). 
They also conjectured, and it was later proven by \cite{Gamarnik2018}, that the Largest Average Submatrix ($\LAS$) algorithm --- an efficient iterative row/column optimization scheme 
--- fails to reach the global optimum, as its fixed point 
--- akin to local minima ---
has with high probability average equal to $A_{\LAS} = \sqrt{2 \log N / k}$. 
The fact that the $\LAS$ algorithm fails bears a natural question: 
is $A_{\LAS}$ an algorithmic threshold signalling the onset of a hard phase?
In \cite{Gamarnik2018}, the authors introduced a new algorithm called Incremental Greedy Procedure ($\IGP$) which is able to produce submatrices with average $A_{\IGP} = 4/3 \sqrt{2 \log N / k} > A_{\LAS}$.
Additionally, they proved that for averages larger than at least $A_{OGP} = 10/(3\sqrt{3}) \sqrt{\log N / k} > A_{\IGP}$ the problem satisfies the Overlap Gap Property (OGP) \cite{reviewGamarnik}. 
This means that i) the  $\LAS$ threshold $A_{\LAS}$ seems to be an algorithm-specific threshold, and not a more general trace of an intrinsic computational-to-statistical gap, and ii) that the problem will likely exhibit a hard phase preventing algorithms to find submatrices with averages larger than at least $A_{\rm OGP}$.
\cite{Cheairi} discusses additional results for $k = N^{\gamma}$, $0 < \gamma < 1$.

\section{The model}
We consider an $\Nr \times \Nc$ random matrix $J$ composed of i.i.d. Gaussian entries with zero mean and unit variance.
A $\kr \times \kc$ submatrix $\sigma$ of $J$ is defined by two arbitrary subsets of rows and columns $I_r, I_c$ such that $J_{ij}$ belongs to the submatrix $\sigma$ if and only if $i \in I_r$ and $j \in I_c$, and the cardinalities satisfy $|I_r| = k_r$ and $|I_c| = \kc$. 
There are three versions of the MAS problem:
\begin{itemize}
    \item \textbf{Rectangular MAS:} $\Nr$, $\Nc$, $\kr$ and $\kc$ are unconstrained. This is the problem relevant for applications \cite{LAS, biclustering_survey}.
    \item \textbf{Square MAS of a square matrix:} $\Nr = \Nc = N$ and $\kr = \kc = k$. This version is studied in the mathematical literature \cite{Bhamidi2017, Gamarnik2018, Cheairi}.
    \item \textbf{Principal MAS of a symmetric matrix:}   $\Nr = \Nc = N$, $\kr = \kc = k$, $J = J^T$ is symmetric and we consider only \textit{principal submatrices}, i.e. submatrices for which $I_r = I_c$.
\end{itemize}

In the following, we focus on the case of the principal MAS of a symmetric random matrix motivated by its close relation to the SK model. 
In the SM, we sketch the corresponding solution for the rectangular MAS of a random matrix and realize with surprise that the equations leading to the phase diagram of the square MAS of a square random matrix are exactly the same as the ones for the principal MAS of a symmetric random matrix. This allows us to compare our results directly to the mathematical literature, and it also means that the phase diagram we provide applies to the non-symmetric case. 
We believe that the physics of the rectangular MAS problem has two more hyperparameters $N_r/N_c$ and $k_r/k_c$ leading to a 4-dimensional phase diagram. Its exploration is left for future work.

We encode principal submatrices, i.e. their row/column index set $I$, as Boolean vectors $\sigma = \{ \sigma_i \}_{i=1}^N \in \{0, 1\}^N$ such that, if $i \in I$ then $\sigma_i = 1$ and vice versa.
We fix the size of the submatrix to $k = m N$, which in the Boolean representation translates to the condition $\sum_i \sigma_i = m N$.
We call $m \in (0,1)$ magnetization.
The average of the entries of a submatrix $\sigma$ can be  then expressed as 
\begin{equation}
  A = |\sigma| = \frac{1}{m^2 N^2} \sum_{i, j = 1}^N J_{ij} \sigma_i \sigma_j \, .
\end{equation}
We define $a = A \sqrt{N}$, and we will see that $a$ is of order one in the thermodynamic limit.

We probe the energy landscape of the MAS problem by studying the associated Gibbs measure 
\begin{equation}
  p(\sigma) = e^{\beta E(\sigma) + \beta h \sum_{i=1}^N \sigma_i} / Z(\beta, h) \, ,
\end{equation}
where $\beta$ is an inverse temperature that we use to fix the average energy, $h$ is a magnetic field that we use to fix the magnetization $m$ and $Z(\beta, h)$ is the partition function.
The uncommon plus sign in front of the inverse temperature is due to the fact that the problem is a maximisation problem. Thus, small temperatures correspond to large positive energies in this model.
The energy function is defined as 
\begin{equation}
  E(\sigma) = \frac{1}{\sqrt{N}} \sum_{i<j} J_{ij} \sigma_i \sigma_j = \frac{m^2 N}{2} a \, ,
\end{equation} 
which, modulo subleading contributions coming from the diagonal term, is a multiple of the submatrix average $a$. 
As the considered model resembles the classic SK model, note that the mapping of the Boolean spins to $\pm 1$ spins, i.e. $s = 2\sigma-1$, leads to an SK model in a random magnetic field, with couplings correlated to the magnetic field (see SM). Such a model has not been considered in the physics literature as far as we are aware. 

We will compute all thermodynamic observables through the \textit{quenched free entropy}, i.e. $\Phi = \lim_{N\to\infty} \EE_J \log Z(\beta, h) / N$ where $\EE_J$ denotes averaging over the distribution of $J$.
We will see that the free entropy can be expressed as a variational problem for the \textit{overlap} order parameter, which is defined as $q = N^{-1} \sum_{i=1}^N \sigma_i^a \sigma_i^b \in [0, m]$ for two replicas of the system $\sigma^a$ and $\sigma^b$.

\section{Replica analysis of the free entropy}
We compute the quenched free entropy $\Phi$ using the replica formalism \cite{mezard1987spin}, i.e. by using the replica trick $\EE_J \log Z = \lim_{n\to 0} (\EE_J Z^n - 1)/n$, computing $\EE_J Z^n$ for integer values of $n$ and performing an analytical continuation to take the $n \to 0$ limit. 
We perform the analytical continuation under the one-step Replica Symmetry Breaking (1-RSB) ansatz,
in which we assume that the Gibbs measure decomposes into a 2-level hierarchy of pure states.
This hierarchy is characterised by two overlaps: the average overlap between microstates belonging to the same pure state $q_1$, and the one between microstates belonging to different pure states $q_0$. The Parisi parameter $p$ acts as a temperature controlling the trade-off between the free entropy of a single pure state, and the entropy of pure states \cite{monasson}.
After a derivation 
that follows steps standard to the replica method \cite{mezard1987spin, nishimori}, we obtain
the following variational free entropy:
\begin{equation}\label{eq.phiRSB1}
  \begin{split}
    \Phi_{\rm 1-RSB}&(m, q_0, q_1, p) = 
    \\&=
      - \frac{\beta^2}{4} 
      \left[ m^2  + (p-1) q_1^2 - p q_0^2 \right]
      \\&\quad
      + \frac{1}{p}
      \int Du \log\left[ \int Dv
      \left[ 1 + e^{\beta H(u,v)}\right]^{p}\right] 
    \, , \\
    H(u, v) &= h + \frac{\beta}{2} (m - q_1) 
      + \sqrt{q_0} u + \sqrt{q_1 - q_0} v \, ,
  \end{split}
\end{equation}
where $Du$ and $Dv$ denote integration against a standard Gaussian measure.
The variational free entropy depends on the submatrix size/magnetization $m$, the intra-state overlap $q_1$, the inter-state overlap $q_0$ and the Parisi parameter $p$.
To obtain the equilibrium free entropy we extremize the variational free entropy over $m, q_0$ and $q_1$. 
The Parisi parameter must be set to one if the resulting complexity (whose definition we provide in the following) is positive, otherwise, the variational free entropy must also be extremized over $p$.

Under the 1-RSB ansatz we have the following expressions for the observables, to be evaluated at the equilibrium values of the order parameters.
The average energy density (and the submatrix average) equals
\begin{equation}\label{eq.energy}
  e = \frac{m^2}{2} a = \frac{\beta}{2} 
  \left[ m^2  + (p-1) q_1^2 - p q_0^2 \right]
  \, , 
\end{equation}
the total entropy (logarithm of the number of microstates) equals 
\begin{equation}\label{eq.entropy}
  s_{\rm total} = \Phi - \beta h m - \beta e \, , 
\end{equation}
and the complexity (logarithm of the number of pure states contributing to the Gibbs measure) equals
\begin{equation}\label{eq.complexity}
  \Sigma = \max\left(0, 
    \del_p \left[\extr_{m, q_0, q_1} \Phi_{\rm 1-RSB} \right]_{p=1}
  \right)    \, .
\end{equation}
If the complexity is non-zero, the total entropy decomposes as 
$s_{\rm total} = s_{\rm internal} + \Sigma$,
where the internal entropy $s_{\rm internal}$ is the logarithm of the number of microstates contributing to each of the exponentially many pure states.

Finally, we need to investigate whether the 1-RSB result is exact in the thermodynamic limit. This is done by analyzing the stability of the 1-RSB ansatz against perturbation of higher-order RSB nature. We perform the so-called Type-I stability analysis (see SM Section I.d), obtaining the stability condition
\begin{equation}\label{eq.stab}
    \begin{split}
        \int Du \,  
        \frac
        {\int Dv \, \left(1 + e^{\beta H}\right)^{p}  \left[
            \ell(\beta H)^2 \left(1 - \ell(\beta H)\right)^2
        \right]}
        {\int Dv \, \left(1 + e^{\beta H}\right)^{p}}
        < \frac{1}{\beta^2} \, ,
    \end{split}
\end{equation}
where $\ell(x) = 1/(1+\exp(-x))$.  When this condition is not satisfied then the 1-RSB results are just an approximation and more steps of RSB need to be taken into account.

We derived the variational free entropy using the replica trick. 
Note, however, that the proof of the full-RSB free entropy giving the exact solution in the thermodynamic limit from \cite{Panchenko} applies to the MAS problem and thus our setting. 
In order to apply their result to our model we note that \cite{Panchenko} constrains the free-entropy to fixed self-overlap $q_{\rm self} = N^{-1} \sum_i \sigma_i^2$, while we constrain the model to fixed magnetization $m = N^{-1} \sum_i \sigma_i$. Due to the choice of Boolean spins, we have that $\sigma_i^2 = \sigma_i$, so that the two constraints coincide (this is not true in general).

\section{The phase diagram}\label{sec.phases}

After solving the above equations, we identify five distinct phases for finite $m \in (0,1)$, and we plot them in Figure \ref{fig.ph}.

\paragraph*{RS phase ---}
For small submatrix-average (corresponding to large temperatures) we observe a replica-symmetric (RS) phase, in which the extremum of the variational free entropy is attained at $q^*_0 = q^*_1$. 
In this phase, the complexity is zero, while the total entropy is strictly positive.
As the submatrix average increases (i.e. the temperature is lowered), the system undergoes a phase transition to an RSB phase.
The nature of the transition is different for $m \le m_c$ and $m \ge m_c $.
We start discussing the former case.

\paragraph*{Dynamical 1-RSB phase ---}
For $m \le m_c$, we observe a discontinuous transition at a value $a_{\rm dynamic}(m)$ of the sub-matrix average from the RS phase to a dynamical 1-RSB phase, in which the extremum of the variational free entropy satisfies $q^*_0 \neq q^*_1$. 
This transition is identified by a sharp jump of the complexity from zero to a positive value, 
meaning that the measure shatters into an exponential number of pure states each with non-zero entropy.
As $m$ decreases, the appropriately rescaled internal entropy decreases, suggesting that in the $m\to 0$ limit, this phase becomes a frozen 1-RSB phase, see Figure \ref{fig.entropies} in SM.

\paragraph*{Static 1-RSB phase ---}
For $m \le m_c$, in the dynamical 1-RSB phase, the complexity continuously decreases as the submatrix average increases. At the value of submatrix average $a_{\rm static}(m) > a_{\rm dynamic}(m)$ at which the complexity vanishes we observe a first-order phase transition, from the dynamical 1-RSB phase to a static 1-RSB phase ($q^*_0 \neq q^*_1$, zero complexity, positive entropy). 

\paragraph*{Full-RSB phase ---}
For values $m \ge m_c$ we observe a continuous phase transition from RS to full-RSB phase at $a_{\rm stability}(m)$. The transition happens when the Type-I 1-RSB stability condition \eqref{eq.stab} fails. We conjecture that this phase is fully replica-symmetry-broken by the similarity to the SK model, in which a similar continuous phase transition to full-RSB occurs.
For values of $m_c \ge m \ge m^*$, we observe a Gardner-like phase transition at a value $a_{\rm stability}(m)$ of the sub-matrix average from the 1-RSB  phase to a full-RSB phase. This transition is reminiscent of the one known from the Ising $p$-spin model \cite{gardner1985spin}.  

\paragraph*{UNSAT phase ---}
As the average of the matrix increases, we encounter a point at which the total entropy vanishes, denoting that the sub-matrix average has reached its maximum value $a_{\rm max}(m)$. After this point, the total entropy becomes negative and we observe an unsatisfiable (UNSAT) phase, where no submatrix with that value of the submatrix-average exists.
For $m \ge m^*$, we provide only an approximate estimate of this transition line (while all other transitions presented are exact up to the precision of the numerical solver). 
We estimated $a_{\rm max}(m)$ in the 1-RSB solution, even though this ansatz is unstable in this phase, by computing the 1-RSB entropy, finding the temperature at which it vanishes, and computing the corresponding submatrix average.

It is often the case that the 1-RSB prediction for the maximum energy is numerically very close to the full-RSB prediction. 
Evaluation of the the full-RSB equations, which are proven to give the correct maximum average (analogous to the ground-state energy in the SK model) \cite{Panchenko}, is left for future work.

\paragraph*{Tricritical points ---}
The phase diagram features two tricritical points. The first one, at $(m_c, a_c) \approx (0.09-0.1, 9.3-9.7)$ marks the coexistence of the RS, 1-RSB and full-RSB phases. It can be pinpointed by finding the intersection of the stability and the static transition lines. As $m \to m_c^-$, the static and dynamic transitions approach very quickly so that it is very difficult to distinguish them numerically.
The second tricritical point is at $(m^*, a^*) \approx (0.0001-0.002, 100-650)$, marking the crossing between the stability and the UNSAT transition lines. For $m \le m^*$ the 1-RSB phase is stable up to the maximum average $a_{\rm max}(m)$.
This second tricritical point is hard to pinpoint numerically accurately. In the SM, we show analytically that at least for $m\to 0$ the 1-RSB phase is indeed stable up to $a_{\rm max}(m)$. Thus, by continuity, this second tricritical point must exist.

\section{The small magnetization limit}

We now study the phase diagram in the $m\to 0$ limit, corresponding to the $1 \ll k \ll N$ regime.
The limit must be taken carefully in order to preserve the extensivity of the energy function in the thermodynamic limit.
Indeed, we have that for fixed $m$ and $N$
\begin{equation}
  \text{var}(E(\sigma)) = \caO\left( m^2 N \right) \, , \quad  \# = \caO(N m \log 1/m ) \, ,
\end{equation}
where $\#$ denotes the logarithm of the number of microstates at fixed $m$ and $N$.
As $m\to 0$, the energy must be rescaled by $c(m) = \sqrt{\log(1/m)/m}$, and the entropy and complexity must be rescaled by $m \log(1/m)$. 
This can be achieved by considering the $m\to 0$ limit of \eqref{eq.phiRSB1} at fixed $b = \beta /c(m)$.
We perform analytically the limit in the RS and 1-RSB solutions, leading to the following phase diagram.

\paragraph*{RS phase ---} 
For sub-matrix average $a < a_{\rm dynamic} = \sqrt{2} \, c(m)$, we observe a stable RS phase with zero complexity and positive total entropy. In this phase, $q_0 = q_1 = m^2$.

\paragraph*{Frozen 1-RSB phase ---} 
For sub-matrix average $a_{\rm dynamic} < a < a_{\rm static} = 2 \, c(m)$, we observe a stable 1-RSB phase with $q_1 = m$, $q_0 = m^2$ and complexity $\Sigma  = 1 - b^2/4$.
This is a frozen phase, meaning that each of the exponentially-many pure states contributing to the measure 
has zero internal entropy.

\paragraph*{UNSAT phase ---} 
For sub-matrix average $a > a_{\rm static}$, the total entropy is negative, signalling the onset of the UNSAT phase.

The threshold $a_{\rm dynamic}$ and $a_{\rm static}$ coincide with the thresholds proved in \cite{Bhamidi2017} for, respectively, the submatrix-average of the local maxima $A_{\LAS} = a_{\rm dynamic} / \sqrt{N}$  and the maximum submatrix-average achievable $A_{\rm opt} = a_{\rm static}/ \sqrt{N}$.
Thus, as a byproduct of our analysis, we obtain an equilibrium interpretation of $A_{\LAS}$ as a freezing transition.

The $m\to 0$ limit of the MAS phase diagram resembles closely that of the Random Energy Model (REM) \cite{derrida1981random}.
More precisely, we find that the static threshold, as well as the values of the entropy and complexity, do coincide 
(in the REM 
the static transition threshold equals $a_{\rm static, REM} = 2$, the complexity equals $\Sigma  = 1 - b^2/4$ and the internal entropy is zero for all $b > 0$).
This connection is related to the fact that the MAS energy $E(\sigma)$ is a Gaussian random variable with 
covariance $\angavg{E(\sigma)E(\sigma')} \propto q(\sigma, \sigma') \leq m$, which vanishes in the $m \to 0$ limit.
The notable difference between the REM and the MAS problem is given by the finite value of the dynamic threshold in the MAS problem, while in the REM the system is frozen at all temperatures.

In \cite{Gamarnik2018}, the authors introduced an algorithm called $\IGP$, and they proved that it can find submatrices with average $A$ which, following our analysis, are inside the frozen 1-RSB region. 
This is at odds with the common belief that solutions in frozen states are algorithmically hard to find   \cite{gamarnik2022disordered}. 
Another problem in which a similar situation happens is the binary perceptron, where the algorithmic feasibility was explained in relation to out-of-equilibrium dense regions \cite{baldassi2015subdominant, baldassi2016unreasonable}.
We leave for future work a deeper understanding of the out-of-equilibrium properties of the MAS problem, and more generally the relation between them and algorithmic tractability.

\section{Conclusions}

In this paper, we studied the maximum average submatrix problem using tools from the statistical physics of disordered systems, and in particular a mapping onto a variant of the SK model. 

We unveiled the phase diagram in the large submatrix regime $k = m N$, discovering a rich phenomenology including glassy phases, and phases where exponentially-many pure states contribute to the equilibrium behavior of the system.

By considering the $m\to 0$ limit, we characterized the phase diagram in the small submatrix regime $k \ll N$, shedding some light on previous results \cite{Bhamidi2017,Gamarnik2018} and highlighting a connection to the Random Energy model. 
We note that there exist efficient algorithms that work in the frozen 1-RSB phase, usually associated with hard-algorithmic phases, similar to what happens in the binary perceptron due to non-equilibrium phenomena.

Our findings leave many questions to be answered, such as i) the study of the out-of-equilibrium properties of the problem and their relation with algorithmic hardness and ii) how for $k \ll N$ the vanishingly small correlations between the energies combine to shift the dynamical temperature from infinity (REM) to finite (MAS).

We conclude by remarking that our techniques generalise straightforwardly to the case in which the entries of $J$ are non-Gaussian as long as they are i.i.d. with finite first and second moment, and to the rectangular MAS problem, in which both $J$ and the submatrices may be rectangular possibly with different aspect ratios.

\begin{acknowledgments}
We acknowledge funding from 
the Swiss National Science Foundation grants $200021\_200390$ (OperaGOST) and  $\text{TMPFP2}\_210012$.
\end{acknowledgments}

\providecommand{\noopsort}[1]{}\providecommand{\singleletter}[1]{#1}%
%

\pagebreak
\clearpage

\widetext
\begin{center}
\textbf{\large Statistical mechanics of the maximum-average submatrix problem}\\  \vspace{0.1cm} \textbf{\large Supplemental Materials}
\end{center}
\setcounter{section}{0}
\setcounter{equation}{0}
\setcounter{figure}{0}
\setcounter{table}{0}
\setcounter{page}{1}
\makeatletter
\renewcommand{\theequation}{S\arabic{equation}}
\renewcommand{\thefigure}{S\arabic{figure}}
\renewcommand{\bibnumfmt}[1]{[S#1]}

The code used to produce the plots can be found at 
\url{https://github.com/SPOC-group/Maximum-Average-Submatrix.git}.

\section{Mapping from boolean to binary spins}

The energy of the system is defined as 
\begin{equation}
  E(\sigma) = \frac{1}{\sqrt{N}} \sum_{i<j} J_{ij} \sigma_i \sigma_j \, ,
\end{equation} 
where $\sigma$ is a configuration of $N$ boolean spins $\sigma \in \{0, 1\}^N$.
The straight-forward mapping from boolean to binary spins, i.e. defining $s_i = 2\sigma_i - 1$, leads to the equivalent energy function
\begin{equation}
    \tilde E(s) 
    = \frac{1}{\sqrt{N}} \sum_{i<j} J_{ij} \frac{s_i+1}{2} \, \frac{s_j+1}{2} 
    = 
      \frac{1}{4 \sqrt{N}} \sum_{i<j} J_{ij} s_i s_j
    + \frac{1}{4 \sqrt{N}} \sum_{i} s_i \sum_{j\neq i} J_{ij}
    + \frac{1}{4 \sqrt{N}} \sum_{i<j} J_{ij}  
    \, .
\end{equation}
Thus, written as a function of binary spin configurations $s$, the energy has the same SK-like interaction term, but develops  an additional random magnetic field interaction (the second term). 
Notice that the random field is not independent from the pair-wise interactions $J$, leading to a model which is quantitatively different from the usually studied SK model.

\section{Universality of the phase transitions}

In Figure \ref{fig.ph} we presented the phase diagram for the MAS problem for finite $m$ and small $m$, and observed a variety of phase transitions. 
In the following table we provide references to a selection of other mean-field models in which the same type of phase transitions arise. We order then in a way that the upper lines are analogous to large $m$ and lower lines to small $m$ in the MAS problem.  
The phenomenology at each transition of the MAS problem is qualitatively similar to the phenomenology of the corresponding transition in the models listed below.

We notice that a tricritical point similar to the one we found at $m = m_c$ ("Tricritical 2" in Figure \ref{fig.ph})
is present in the perceptron with negative margin.
We could not find an example of a model in which a tricritical point akin to the one we find at $m = m_*$ ("Tricritical 1" in Figure \ref{fig.ph}) arises.

\begin{center}
\renewcommand{\arraystretch}{1.5}
\begin{tabular}{l|l}
    Model
    & Phase transitions
    \\
    \hline
    SK ($p=2$ binary $p$-spin model) \cite{sherrington1975solvable, mezard1987spin}
    & RS $\to$ Full-RSB $\to$ UNSAT
    \\
    Perceptron with margin $\kappa \geq \kappa_{\rm 1RSB}$  \cite{zamponi}
    & RS $\to$ Full RSB  $\to$ UNSAT
    \\
    $p > 2$ binary $p$-spin model \cite{gardner1985spin}
    & RS $\to$ d-1RSB $\to$ s-1RSB $\to$ Full RSB $\to$ UNSAT
    \\
    Perceptron with margin $\kappa \leq \kappa_{\rm RFOT}$ \cite{zamponi}
    & RS $\to$ d-1RSB $\to$ s-1RSB $\to$ Full RSB  $\to$ UNSAT
    \\
    $p > 2$ spherical $p$-spin model \cite{Crisanti1992}
    & RS $\to$ d-1RSB $\to$ s-1RSB $\to$ UNSAT
    \\
    $q$-coloring ($q \geq 3$) \cite{qCol}
    & RS $\to$ d-1RSB $\to$ s-1RSB $\to$ UNSAT
    \\
    Locked constraint satisfaction problems \cite{locked}
    & RS $\to$ Frozen 1RSB $\to$ UNSAT
    
\end{tabular}

\vspace{0.25cm}
Legend: s-1RSB = static 1-RSB, d-1RSB = dynamic 1-RSB.
\renewcommand{\arraystretch}{1.}
\end{center}

\section{The 1-RSB solution}

\paragraph{Variational free entropy ---}
To derive the variational free entropy \eqref{eq.phiRSB1}, one can follow the derivation for the SK model presented in \cite{nishimori}.
The only difference in our case is that $\sigma^2 = \sigma$ as $\sigma \in \{0,1\}$, contrary to the usual case $\sigma_{\rm SK}^2 = 1$ as $\sigma_{\rm SK} \in \{-1, +1\}$.
The extremization conditions are
\begin{equation}\label{eq.1-RSBSP}
    \begin{split}
        m &= \int Du \,\,\frac{\int Dv \,\, \logistic(\beta H)(1+e^{\beta H})^{p}}{\int Dv \,\,(1+e^{\beta H})^{p}} \, ,\\
        q_0 &= \int Du \,\,\left[\frac{\int Dv \,\, \logistic(\beta H)(1+e^{\beta H})^{p}}{\int Dv \,\,(1+e^{\beta H})^{p}}\right]^2 \, ,\\
        q_1 &= \int Du \,\,\frac{\int Dv \,\, \logistic^2(\beta H)(1+e^{\beta H})^{p}}{\int Dv \,\,(1+e^{\beta H})^{p}} \, ,\\
        \Sigma(p) &= p^2\frac{\beta^2(q_1^2-q_0^2)}{4} - p\int Du \,\frac{\int Dv \,\logpexp(\beta H)(1+e^{\beta H})^{p}}{\int Dv \,\,(1+e^{\beta H})^{p}} + \int Du \,\,\log\left[ \int Dv \,\,(1+e^{\beta H})^{p}\right] \,,
    \end{split}
\end{equation}
where $\logistic(x) = \left(1 + e^{-x}\right)^{-1}$ and $H(u, v) = h + \frac{\beta}{2} (m - q_1) + \sqrt{q_0} u + \sqrt{q_1 - q_0} v\equiv H_{\rm 1-RSB}$. We highlighted that the extremization condition inn $p$ is equivalent to imposing zero complexity.
Their derivation requires the usage of integration by parts repeatedly, in the form 
\begin{equation}
    \int Du \, u f(u) = \int Du \, f'(u) \, .
\end{equation}

\paragraph{Observables ---}
To derive the expressions for the energy $e$ \eqref{eq.energy} and total entropy $s$ \eqref{eq.entropy} in the 1-RSB solution, use the grand-canonical thermodynamic relations $e = \del_\beta \Phi -h m$ and $\Phi(\beta) = s + \beta e + \beta h m$.
Applying these relations to the variational free entropy \eqref{eq.phiRSB1} produces the equations presented in the text, to be evaluated then at the equilibrium values of the order parameters.

\paragraph{Complexity ---}
To derive the expression for the complexity $\Sigma$ \eqref{eq.complexity} in the 1-RSB solution, we follow \cite{monasson}.
We start by introducing a deformed partition function
\begin{equation}
    Z(p) = \sum_\alpha Z_\alpha^p = \sum_\alpha e^{N p \phi_\alpha}
    = \sum_\phi e^{N ( p \phi + \Sigma(\phi) )}
    = e^{N \extr_{\phi : \Sigma(\phi) \geq 0} ( p \phi + \Sigma(\phi) )  } \, ,
\end{equation}
where $p$ is just a weighting parameter --- at $p=1$ we are computing the usual partition function $Z$ --- and the sum over $\alpha$ runs over all pure states contributing to the Gibbs measure, each with free entropy $\phi_\alpha$.
One can interpret $p$ as being a number of replicas of our system that are constrained to be in the same pure state, and $Z(p)$ would then be the correct partition function for such replicated system. Thus, we have 
$\EE \log Z(p) / N \sim p \Phi_{\rm 1-RSB}(p)$, where the factor $p$ matches the number of replicas used on the two sides of the equation.
Now, the extremization over $\phi$ on the left-hand side can be performed explicitly, and the extremizer $\phi_*(p)$ satisfies
\begin{equation}
    \begin{cases}
        p + \Sigma'(\phi_*(p)) = 0 & \text{if }\Sigma(\phi_*(p)) \geq 0 \, ,\\
        \Sigma(\phi_*(p))  = 0 & \text{otherwise} \, .
    \end{cases}
\end{equation}
Whenever $\Sigma(\phi_*(p)) \geq 0$  
\begin{equation}
    \begin{split}
        \del_p \Phi_{\rm 1-RSB}(p)
        &= \del_p \frac{p \phi_*(p) + \Sigma(\phi_*(p))}{p} 
        = \frac{p^2 \del_p \phi_*(p) +p \Sigma'(\phi_*) \del_p \phi_*(p) - \Sigma(\phi_*(p))}{p^2}
        = -\frac{\Sigma(\phi_*(p))}{p^2} \, ,
    \end{split}
\end{equation}
so that 
\begin{equation}
    \Sigma(\phi_*(p)) = - p^2 \del_p \Phi_{\rm 1-RSB}(p) \, .
\end{equation}
Plugging $p=1$, corresponding to the equilibrium partition function, we get
\begin{equation}
    \Sigma(\phi_*) = \max\left(0, - \del_p \Phi_{\rm 1-RSB}(p=1) \right)\, .
\end{equation}

\paragraph{Stability ---}\label{sec.SMstability}

The most general stability condition for the $k$-RSB solution can be derived by computing the Hessian at the $k$-RSB equilibrium of the $n$-replicas free-entropy. A detailed derivation for the RS solution ($q_0 = q_1$) of the SK model can be found in \cite[Chapter 3]{nishimori}. By adapting it to our model, it is easy but tedious to see that the  the RS  stability condition 
\begin{equation}
    \beta^2 \int Du \, \logistic(\beta H_{\rm RS})^2 \left(1 - \logistic(\beta H_{\rm RS})\right)^2 < 1\, ,
\end{equation}
where $H_{\rm RS}(u) = \sqrt{q} z + h + \frac{\beta}{2}( m - q )$.

For the 1-RSB stability, we check a simpler condition, called Type-I stability. 
It is the linear stability of the 2-RSB extremization conditions (seen as fixed point iterations for the overlaps) around the 1-RSB fixed point under the perturbation $q^{\rm 2-RSB}_2 = q^{\rm 1-RSB}_1 + \epsilon$ and
        $q^{\rm 2-RSB}_{0, 1} = q^{\rm 1-RSB}_{0, 1}$.
We conjecture that this threshold is not just a bound, but the correct one, in analogy with what happens in the SK model.
In practice, we consider the equation for $q_2$ in the 2-RSB solution, i.e. 
\begin{equation}
    q_2 = \int Du \,\,
        \frac
    {\int Dv \, \left[ 
    \left(\int Dz \, (1+e^{\beta H_{\rm 2RSB}})^{p_2} \right)^{p_1/p_2 -1 }
    \int Dz \, (1+e^{\beta H_{\rm 2RSB}})^{p_2} \logistic^2(\beta H_{\rm 2RSB}) \right]}
    {\int Dv \, 
    \left(\int Dz \, (1+e^{\beta H_{\rm 2RSB}})^{p_2} \right)^{p_1/p_2 }} \, ,
\end{equation}
where $H_{\rm 2RSB} (u,v,z) =  h + \frac{\beta}{2} (m - q_2) 
    + \sqrt{q_0} u + \sqrt{q_1 - q_0} v + \sqrt{q_2 - q_1} z$, and we evaluate it at the 1-RSB equilibrium value of the order parameters, i.e. $q^{\rm 2-RSB}_{0, 1} = q^{\rm 1-RSB}_{0, 1}$ and $p^{\rm 2-RSB}_1 = p^{\rm 1-RSB}$ for a perturbation $q^{\rm 2-RSB}_2 = q^{\rm 1-RSB}_1 + \epsilon$ at fixed 2-step Parisi parameter $p^{\rm 2-RSB}_2 \in (p^{\rm 1-RSB}, 1)$.
Finally, we expand at first order in $\epsilon$. We report the details of the computation in Section \ref{label.stabdetails}. The linear stability threshold is the point at which the $\caO(\epsilon)$ of the r.h.s. equals 1. This corresponds to the condition (here $p^{\rm 2-RSB}_2 = 1$ gives the strictest condition), giving \eqref{eq.stab}.

\section{Numerical solution for the equilibrium order parameters}

The numerical solution for the equilibrium order parameters is performed by solving the saddle-point equations
\begin{equation}\label{eq.sp}
    \begin{split}
        m &= \int Du \,\,\frac{\int Dv \,\, \logistic(\beta H)(1+e^{\beta H})^{p}}{\int Dv \,\,(1+e^{\beta H})^{p}} \, ,\\
        q_0 &= \int Du \,\,\left[\frac{\int Dv \,\, \logistic(\beta H)(1+e^{\beta H})^{p}}{\int Dv \,\,(1+e^{\beta H})^{p}}\right]^2 \, ,\\
        q_1 &= \int Du \,\,\frac{\int Dv \,\, \logistic^2(\beta H)(1+e^{\beta H})^{p}}{\int Dv \,\,(1+e^{\beta H})^{p}} \, ,\\
        \Sigma(p) &= p^2\frac{\beta^2(q_1^2-q_0^2)}{4} - p\int Du \,\frac{\int Dv \,\logpexp(\beta H)(1+e^{\beta H})^{p}}{\int Dv \,\,(1+e^{\beta H})^{p}} + \int Du \,\,\log\left[ \int Dv \,\,(1+e^{\beta H})^{p}\right] \,,
    \end{split}
\end{equation}
where $\logistic(x) = \left(1 + e^{-x}\right)^{-1}$ and $H(u, v) = h + \frac{\beta}{2} (m - q_1) + \sqrt{q_0} u + \sqrt{q_1 - q_0} v$.
There are several aspects that must be explained.

\paragraph{What to do with the Parisi parameter $p$ ---} 
The complexity prescribes that at equilibrium $p=1$ if the associated complexity is positive, and otherwise $p=p^*$ such that $\Sigma(p^*) = 0$. We solve the first three equations of \eqref{eq.sp} for $m, q_0, q_1$ at $p=1$ first and compute the associated complexity. If it is negative, we discard the solution and solve also for $\Sigma(p) = 0$.

\paragraph{How to solve \eqref{eq.sp} at fixed $p$ ---}
To solve the first three equations of \eqref{eq.sp} for $m, q_0, q_1$ at fixed $p$, we could turn them in a fixed-point iteration scheme, as it is done for the SK model. An added element of complexity is that we actually want to fix $m$, and solve for $h$. This cannot be done \textit{a posteriori} as the $m(h)$ function at the stable fixed point of the equation is not single-valued (while the inverse $m(h)$ is). Thus, we iterate the equations for $q_0$ and $q_1$ as fixed-point equations, and after each iteration we solve the first equation at fixed $m$ for $h$ --- by bisection for example --- using the current value of the other order parameters. Whenever no solution for $h$ is found, we perturb it with random Gaussian noise of small variance, and reiterate the procedure.
Convergence is declared when the relative change in Euclidean norm of the order parameters is below a set tolerance, in our case $10^{-6}$.

\paragraph{How to solve \eqref{eq.sp} imposing $\Sigma(p)=0$ ---}
We use the same procedure as that described above, but every 3 iterations we also solve $\Sigma(p)=0$ for $p$, using the current value of the other order parameters. 

\paragraph{How to stably compute the integrals in \eqref{eq.sp} --- }
To compute the integrals in $Du$ and $Dv$ in the saddle-point equations we use Gauss-Hermite integration, which approximates Gaussian expectations as 
\begin{equation}
    \int Dz \, f(z) = \sum_{i=1}^{n} w_i f(z_i)
\end{equation}
for a specific set of weights $w_i$ and base-points $z_i$.
We use $n = 71$, and pre-compute the weights and base-points in advance.

Notice that this integrals can lead to overflows and underflows due to exponential factors.
For example, consider the simple RS case $q_0 = q_1$. In this case, all integrals in the $v$ variable drop, as there is no dependence on $v$ anymore, leading to a cancellation of the factors $(1+\exp(\beta H))^p$. 
If this cancellation has to happen numerically, it can lead to the aforementioned numerical issues.
To solve the equations in a numerically stable way, we work in log-space. We write
 the integrals as 
\begin{equation}
    \begin{split}
        \int &Du \,\,\frac{\int Dv \,\, f(\beta H)(1+e^{\beta H})^{p}}{\int Dv \,\,(1+e^{\beta H})^{p}} 
        \\&= 
        \int Du \, \exp \left[ 
             \log \int Dv \,\, f(\beta H)(1+e^{\beta H})^{p} - \log \int Dv \,\,(1+e^{\beta H})^{p}
        \right]
        \\
         &= 
        \int Du \, \exp \left[ 
             \log \int Dv \, \exp \left[ \log f(\beta H) + p \logpexp(\beta H) \right] 
             - \log \int Dv \, \exp \left[p \logpexp(\beta H) \right]
        \right] \, .
    \end{split}
\end{equation}
Now we use that
\begin{equation}
    \log \int Dz \, \exp(f(z)) 
    = \log \sum_{i=1}^{n} w_i \exp(f(z_i))
    = \log \sum_{i=1}^{n} \exp(f(z_i) + \log w_i)
    = \text{logsumexp}\left(  \{f(z_i) + \log w_i \}_{i=1}^n  \right) \, .
\end{equation}
where stable logsumexp functions are routinely implemented in many programming languages.
The idea of the logsumexp trick is to write 
\begin{equation}
    \log \sum_i \exp(x_i) = x_* + \log \sum_i \exp(x_i - x_*)
\end{equation}
where $x_* = \max_i \{x_i\}$, so that all exponentials have negative argument.
This allows to compute the integrals quickly and stably.

\paragraph{How to enter the large inverse temperature 1-RSB phase and the dynamical 1-RSB phase}
In both the large inverse temperature 1-RSB phase and the dynamical 1-RSB phase, the solution of the 1-RSB equations depends highly on the initialization. In the first case, poor initialization leads to non-convergence, while the the second case it leads to convergence to the RS fixed point $q_0 = q_1$, in which the complexity erroneously vanishes.
To avoid these problems, we manually find good initializations in the full-RSB region or static 1-RSB respectively, at an inverse temperature just above the relevant phase boundary. We then change slightly $\beta$ (increasing or decreasing it as needed) and feed the previous solution of the equations as initialization at the new inverse temperature. This allows to go at larger values of $\beta$ in the first case, and inside the dynamical 1-RSB region in the second case.

\paragraph{Extrapolation of the observables to compute the stability and the SAT/UNSAT thresholds}
In order to compute the stability and SAT/UNSAT thresholds we need to extrapolate the values of the energy, entropy and the stability condition to large values of $\beta$, larger than reachable with our numerical methods. 
All these quantities behave as $\text{const} + \caO(\beta)$ corrections in the physically relevant region (deep in the unstable phase this is not necessarily true anymore for the stability condition). 
Thus, we fit the large inverse temperature tails of said observables to the  $\text{const} + \caO(\beta)$ asymptotics, and estimate the stability and SAT/UNSAT thresholds on this extrapolated data as the points at which, respectively, the stability condition is not satisfied and the entropy becomes negative.
\paragraph{Estimation of the dynamic and static transitions}
We estimate the dynamic and static transition by considering the complexity (at $p=1$), and computing i) where it first develops a discontinuity jumping from zero to positive as a function of $\beta$, and ii) where it firsts continuously evolves from positive to negative. 
To properly threshold for these events, it is important to rescale the complexity with its natural scaling $m \log(1/m)$ (see the main-text discussion for a justification of this scaling).

\paragraph{Phase diagram as a function of the inverse temperature}
For completeness, in Figure \ref{fig.phbeta} we present the phase diagram as a function of $m$ and $\beta$.
\begin{figure}
\centering
  \includegraphics[width=0.4\columnwidth]{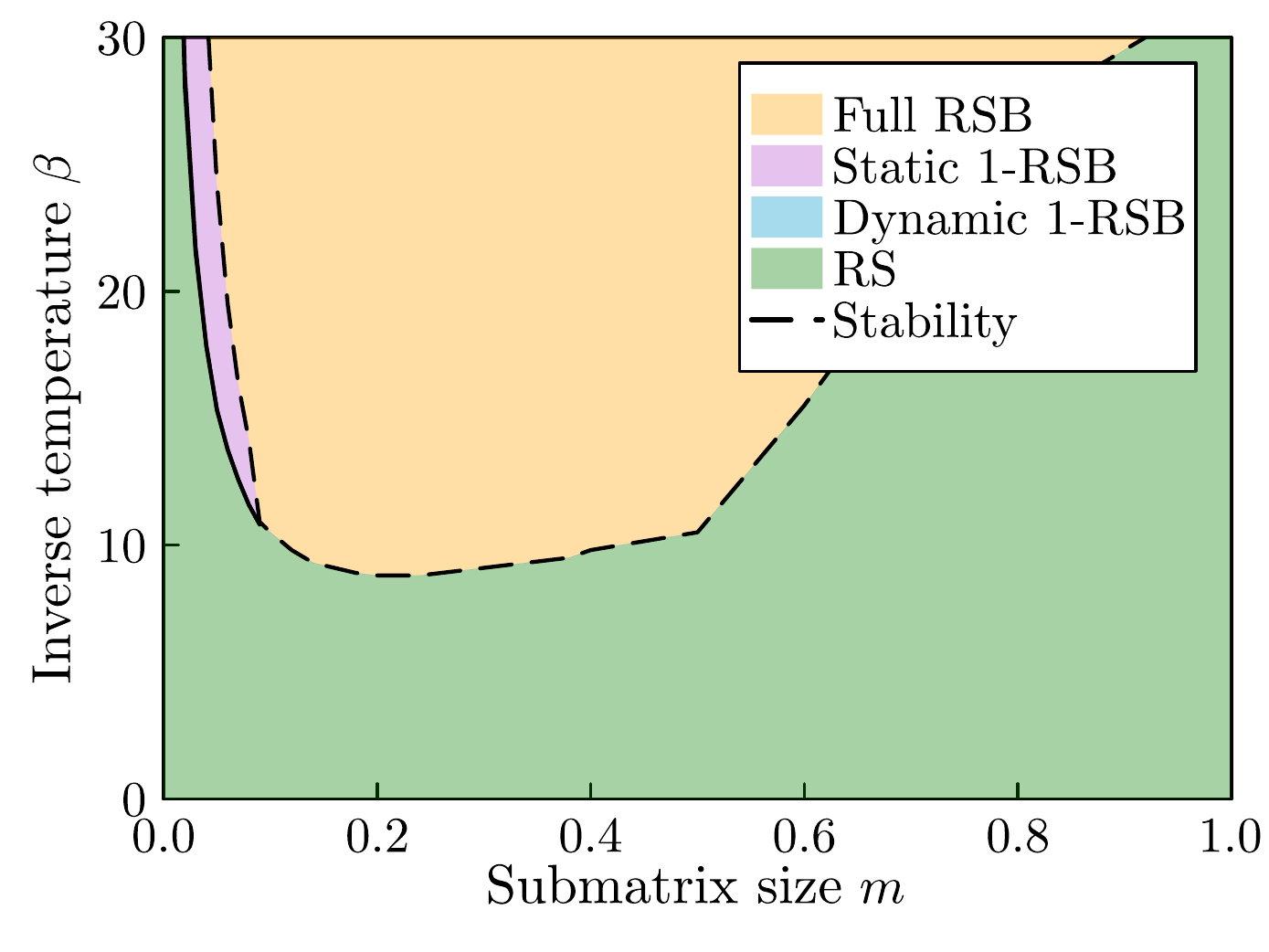}
  \hspace{1cm}
  \includegraphics[width=0.4\columnwidth]{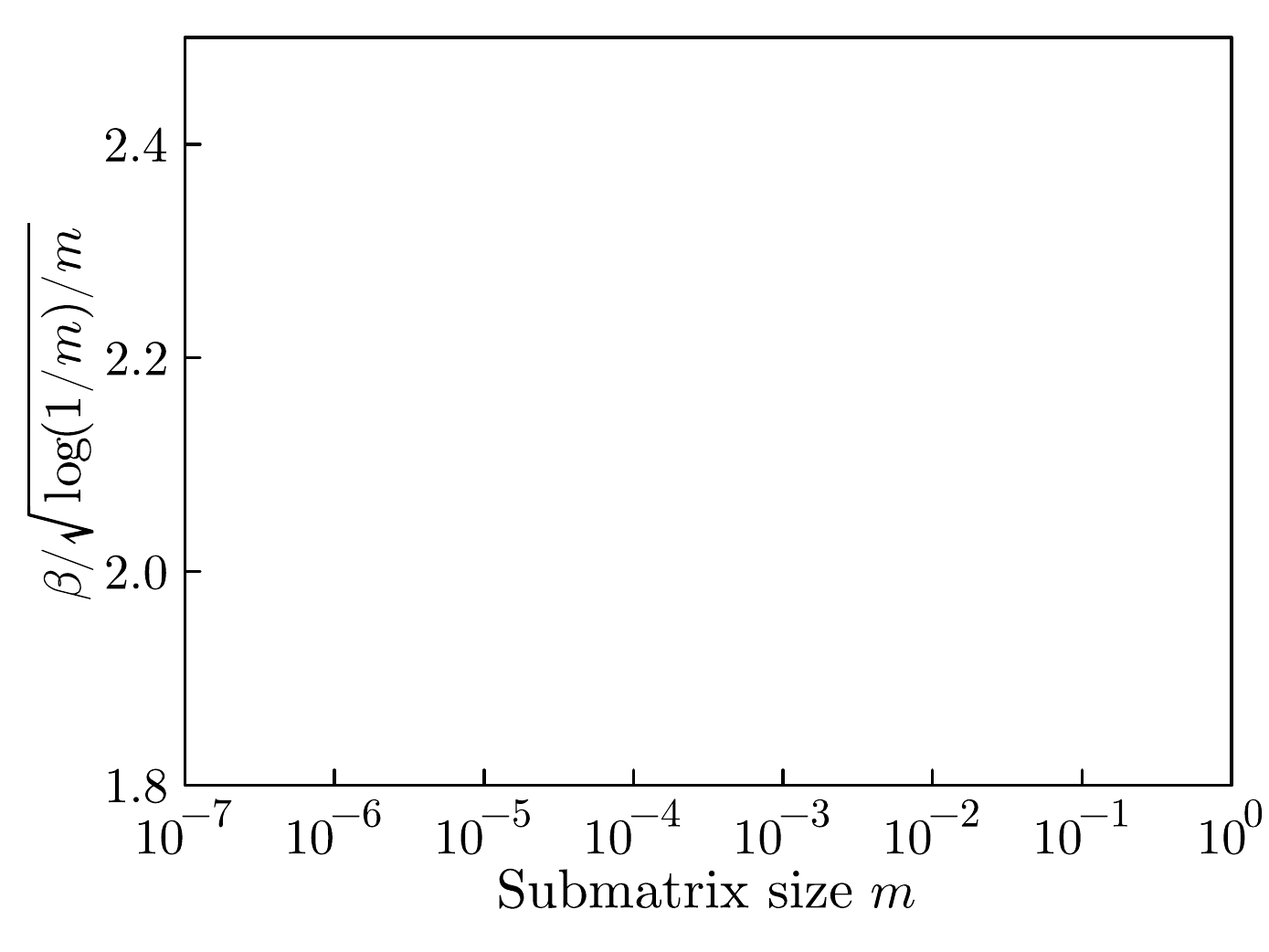}  
  \caption{
    The phase diagram of the MAS problem as a function of the inverse temperature $\beta$ and the submatrix size $m = k/N$ for linear scale in $m$ (left) nad logarithmic scale in $m$ (right).
    On the right we rescaled the inverse temperature to highlight the convergence to the $m\to 0$ limit.
  }
  \label{fig.phbeta}
\end{figure}

\section{The small magnetization limit for the symmetric model}\label{sec.SM1-RSBsmallM}

In this section we derive the $m\to 0$ behaviour of the symmetric MAS problem.
We do this by considering a joint of $m \to 0$ and $\beta \to \infty$, with a scaling relation tuned to preserve the extensivity of the Hamiltonian in the limit.
In particular, we have that for fixed $m$ and $N$
\begin{equation}
  \text{var}(E(\sigma)) = \caO\left( m^2 N \right) \, , \quad  \# = \caO(N m \log 1/m ) \, ,
\end{equation}
where $\#$ denotes the logarithm of the number of microstates at fixed $m$ and $N$.
As $m\to 0$, the energy must be rescaled by $c(m) = \sqrt{\log(1/m)/m}$, and the entropy and complexity must be rescaled by $m \log(1/m)$. 
This can be achieved by considering the $m\to 0$ limit of \eqref{eq.phiRSB1} at fixed $b = \beta /c(m)$.
Below, we perform analytically the limit in the RS and 1-RSB solutions, leading to the following phase diagram.

Notice that our $m \to 0$ results describe the regime $1 \ll k \ll N$. 
Indeed, for $k = \caO(1)$ one must be careful, as the diagonal terms of the submatrix average $\sum_i J_{ii} \sigma_i$ (which we discard as subleading in the thermodynamic limit in our analysis) become important.

\subsection{The scaling limit of the RS solution}

We start by considering the RS equations
\begin{equation}\label{eq.RSSP}
    \begin{split}
        m &= \int Du \, \logistic(\beta H) \, ,\\
        q &= \int Du \, \logistic(\beta H)^2 \, ,\\
        H_{\rm RS} &= h + \sqrt{q} u + \frac{\beta}{2} (m-q) \, , 
    \end{split}
\end{equation}
in the scaling limit 
\begin{equation}
    \begin{split}
        \beta &= b \sqrt{\frac{1}{m} \log \frac{1}{m}} \, ,\\
        h &= \eta \sqrt{m  \log \frac{1}{m}} \, .
    \end{split}
\end{equation}

Then 
\begin{equation}
    \beta H_{\rm RS} = b \left(\eta  + \frac{b}{2} \left(1 - \frac{q}{m}\right) \right) \log \frac{1}{m} + u b \sqrt{\frac{q}{m} \log \frac{1}{m}} \, .
\end{equation}

We notice here that if $q = c m^{1+\alpha}$, with $c>0$ and $\alpha >0$ due to $0 \leq q \leq m$,
the second term goes to zero and can be safely neglected. 
The dependence on $u$ drops, and the saddle point equations imply self-consistently 
\begin{equation}
    q = \logistic(\beta H_{\rm RS})^2 = m^2 \implies c = \alpha = 1 \, .
\end{equation}

The first equation reads
\begin{equation}
    \begin{split}
        m = \logistic\left( b \left(\eta  + \frac{b}{2}  \right) \log \frac{1}{m} \right) 
        \implies
        \log\frac{m}{1-m} = b \left(\eta  + \frac{b}{2}  \right) \log \frac{1}{m} \, ,
    \end{split}
\end{equation} 
from which
\begin{equation}
    \eta = - \left(\frac{1}{b} + \frac{b}{2}\right) \, .
\end{equation}

The energy density equals (we consider the symmetric model here, but for the non-symmetric model the computations are analogous)
\begin{equation}
    e
    = \frac{\beta}{2} (m^2 - q^2) 
    \sim \frac{b}{2} \, \sqrt{m^3 \log \frac{1}{m}} \, .
\end{equation}

Given the \textit{a posteriori} knowledge that the small temperature phase is a 1-RSB phase, we can also access the values of the observables in the frozen 1-RSB phase by 
\begin{itemize}
    \item computing $b_{\rm stability}$ as the valued of $b$ at which the entropy density becomes negative;
    \item computing the observables at $b_{\rm stability}$, as in the 1-RSB phase they do not depend on the temperature anymore.
\end{itemize}
Thus, we have (recall $q \ll m$) 
\begin{equation}\label{eq.RSsmallmEnergy}
    \begin{split}
        \EE\angavg{e} &\sim - \frac{\beta m^2}{2} \\
        \Phi &\sim - \frac{\beta^2 m^2}{4} + \logpexp(\beta H_{\rm RS}) 
            = - \frac{\beta^2 m^2}{4} + \caO\left( m \right)
            \sim - \frac{\beta^2 m^2}{4}
    \end{split}
\end{equation}
and the entropy density
\begin{equation}
    \EE s 
    = \Phi + \beta \EE \angavg{e} - \beta h m 
    = \left(
        1 + \frac{b^2}{2} - \frac{3 b^2 }{4}
    \right) \, m \log \frac{1}{m}
    = \left(
        1 - \frac{b^2 }{4}
    \right) \, m \log \frac{1}{m}
\end{equation}
from which
\begin{equation}
    b_{\rm stability} = 2 \, .
\end{equation}

Notice that, modulo a factor $\log_2$ due a different normalisation of the entropy, the RS $m\to 0$ limit reproduces exactly the results for the RS solution of the Random Energy Model \cite{derrida1981random}.

\subsection{The scaling limit in the 1-RSB solution}

We start by considering the 1-RSB equations
\begin{equation}
    \begin{split}
        m &= \int Du \,\,\frac{\int Dv \,\, \logistic(\beta H)(1+e^{\beta H})^{p}}{\int Dv \,\,(1+e^{\beta H})^{p}}\\
        q_0 &= \int Du \,\,\left[\frac{\int Dv \,\, \logistic(\beta H)(1+e^{\beta H})^{p}}{\int Dv \,\,(1+e^{\beta H})^{p}}\right]^2 \\
        q_1 &= \int Du \,\,\frac{\int Dv \,\, \logistic^2(\beta H)(1+e^{\beta H})^{p}}{\int Dv \,\,(1+e^{\beta H})^{p}} \\
        \Sigma(p) &= p^2\frac{\beta^2(q_1^2-q_0^2)}{4} - p\int Du \,\frac{\int Dv \,\logpexp(\beta H)(1+e^{\beta H})^{p}}{\int Dv \,\,(1+e^{\beta H})^{p}} + \int Du \,\,\log\left[ \int Dv \,\,(1+e^{\beta H})^{p}\right]
    \end{split}
\end{equation}
where
\begin{equation}
    H(u, v) = \sqrt{q_0}u+ \sqrt{q_1-q_0}v + \frac{\beta}{2}(m-q_1) + h \equiv H_{\rm 1-RSB} \, ,
\end{equation}
and $\Sigma(p)$ is the complexity.
The 1-RSB free entropy potential is given by
\begin{equation}
    \begin{split}
        \Phi_{\rm 1-RSB} 
            &= -\frac{\beta^2}{4} \left[m^2 + (s-1) q_1^2 -  pq_0^2\right] + \frac{1}{p} \int Du\,\log\int Dv\,(1+e^{\beta H(u, v)})^{p} \, .
    \end{split}
\end{equation}
We consider the scaling limit 
\begin{equation}
    \begin{split}
        \beta &= b \sqrt{\frac{1}{m} g(m)} \, ,\\
        h &= - \mu \sqrt{m g(m)} \, ,\\
        q_1 &= c^2 m \quad \text{with} \quad c \in (0, 1] \, ,\\
        q_0 &= m^2 \, , \\
        s &= \caO_m(1) \, , \, \, s \in (0,1] \, ,
    \end{split}
\end{equation}
where
\begin{equation}
    g(m) \sim \log \frac{1}{m} \, , \end{equation}
at leading order.
We also assume that $\mu > 0$, in accordance to the RS solution.

We have that
\begin{equation}
    \beta H = - b \mu g + \frac{b^2}{2} (1 - c^2) g + b c \sqrt{g} v + b \sqrt{m g} u
\end{equation}
and we see that the only term that is going to zero is the last term. 
Thus, the dependence on $u$ can be dropped, confirming immediately through the second SP equation that $q_0 = m^2$.

The solution of the equation is detailed in the next section \ref{sec.1-RSB-limit-sol}.
Here we summarise the results.

If $c^2 g \to 0$ --- happening whenever $c$ goes to 0 polynomially --- then the dependence on $v$ drops, and  we get $q_1 = m^2$ similarly as it happened for $q_0$. This retrieves the RS solution computed in the previous Section.

If $c = \caO_m(1)$, then we have $g = - \log m$ to all non-vanishing orders in $m$, and 
\begin{equation}
    \begin{split}
        \mu = \frac{p^2 b^2+2}{2 p b}        
        \quad\text{ and }\quad
        \Sigma(p, b) = \left( 1 - \frac{1}{4} p^2 b^2 \right) m \log \frac{1}{m} 
         \, ,        
    \end{split}
\end{equation}
under the condition $b \geq \max(\sqrt{2} / p, \sqrt{2/p})$.
The equilibrium solution is given by the condition $p=1$ whenever $\Sigma(b, p=1) > 0$, and by the value of $p$ such that $\Sigma(b, p) = 0$ otherwise.
This gives $p=1$ for $\sqrt{2} < b < 2$, and $p = 2/b$ for $b > 2$. 
\begin{itemize}
    \item In the region of positive complexity $p=1$, we have that exponentially many thermodynamic states with the same free entropy contribute to the thermodynamics of the system.
We have that 
\begin{equation}
    \begin{split}
        \Phi &= - \frac{b^2}{4} p \, m \log \frac{1}{m} \, ,\\
        \beta e &= \frac{b^2}{2} p \, m \log \frac{1}{m} \, ,\\
        \beta h m &= - \mu b \, m \log \frac{1}{m} \, ,\\
    \end{split}
\end{equation}
so that 
\begin{equation}
    s_{\rm internal}
    =  \Phi -  \beta h m - \Sigma + \beta e
    = 0 
\end{equation}
confirming that this phase is a frozen 1-RSB phase, and 
\begin{equation}
    a = \frac{2}{m^2 N^{1/2}} e 
    = \frac{2}{m^2 N^{1/2}} \,  \frac{b}{2} m^{3/2} \sqrt{\log \frac{1}{m}}
    = b \sqrt{\frac{1}{mN} \log \frac{1}{m}} \, .
\end{equation}
In particular, this gives $a_{\rm dynamic}$ and $a_{\rm static}$ by substituting the limits $b=\sqrt{2}$ and $b=2$.
\item In the region of zero complexity $p = 2/b$, and 
\begin{equation}
    \begin{split}
        \Phi &= - \frac{b}{2} \, m \log \frac{1}{m} \, ,\\
        \beta e &= b \, m \log \frac{1}{m} \, ,\\
        \beta h m &= - \mu b \, m \log \frac{1}{m} \, ,\\
    \end{split}
\end{equation}
so that 
\begin{equation}
    a = \frac{2}{m^2 N^{1/2}} e 
    = \frac{2}{m^2 N^{1/2}} \, m^{3/2} \sqrt{\log \frac{1}{m}}
    = 2 \sqrt{\frac{1}{mN} \log \frac{1}{m}} \, ,
\end{equation}
confirming that for $b > 2$ we have that $a = a_{\rm max} = a_{\rm stability}$.
\end{itemize}

We find another 1-RSB solution for $0<p<1$ that is not stable under 2-RSB perturbations, and that could not be reproduced  by us  by solving numerically the 1-RSB saddle-point equations, so we discard it as unphysical.

\begin{figure}
  \centering
  \includegraphics[width=1\columnwidth]{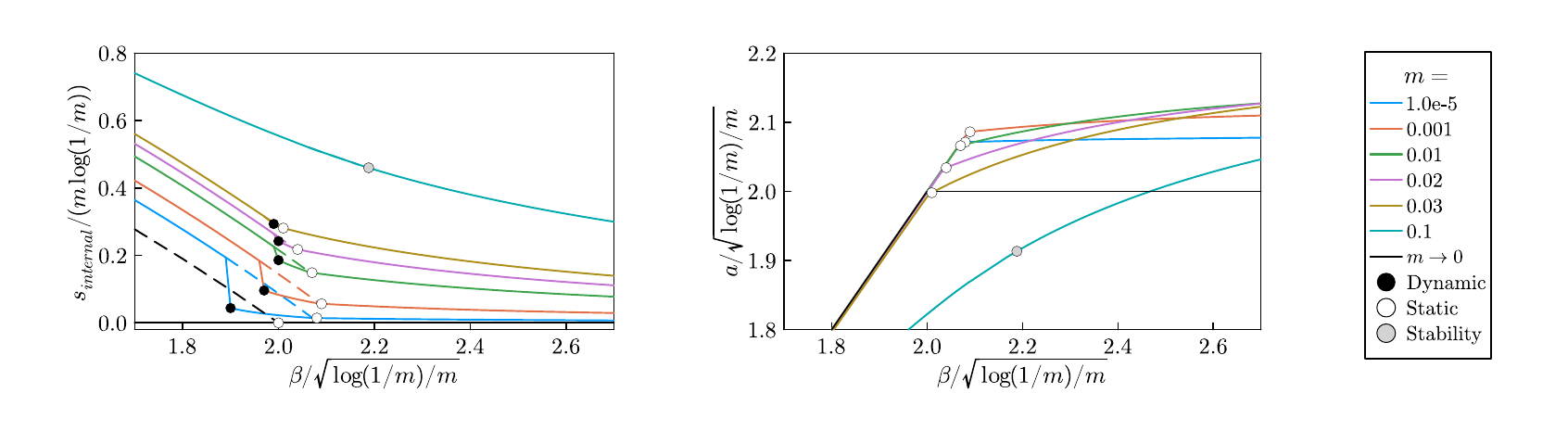}

  \caption{
    Behavior of the internal entropy (left panel, solid lines), total entropy (left panel, dashed lines) and the submatrix average (right panel) as a function of the rescaled inverse temperature $\beta / \sqrt{ \log(1/m) / m}$ for various values of $m$. In black, we plot the analytical predictions for the $m\to 0$ limit.
    We see that both observables converge to the $m\to 0$ limit extremely slowly, in accordance with our analytical analysis that shows that the next-to-leading order in the $m\ll 1$ expansion is only logarithmically small, see SM.
    Moreover, we see that the internal entropy in the dynamical 1-RSB region (where it is different from the total entropy due to the non-vanishing complexity) decreases as $m$ goes to zero, foreshadowing the frozen 1-RSB phase that arises in the limit $m \to 0$.
  }
  \label{fig.entropies}
\end{figure}

\subsection{Solution of the 1-RSB SP equations in the scaling limit}
\label{sec.1-RSB-limit-sol}
    
We start by noticing that for $c^2 \to 0$ such that $c^2 g \to 0$ the dependence on $v$ drops, and  we get $q_1 = m^2$ similarly as it happened for $q_0$. This retrieves the RS solution.
If instead the $v$ dependence does not drop, then we can expect $q_1 = c^2 m$ with $c \neq 0$. 
Thus, our task is now to solve the remaining three SP equations for $\mu, c, p$ at fixed $b$ and for $c\neq 0$, $\mu > 0$.

Notice that, calling 
\begin{equation}
    A = \beta h + \frac{\beta^2}{2} (m-q_1) \sim \left( - b \mu + \frac{b^2}{2} (1-c^2)\right) \log\frac{1}{m}
    \mathand
    B = \beta\sqrt{q}\sim b c \sqrt{\log\frac{1}{m}}
    \mathand 
    N = \sqrt{\log 1/m}
\end{equation}
we have that the 1-RSB equations involve integrals which are in the setting of Proposition \ref{prop.logistic}, with $p > 0$ and $\ell=0, 1, 2$. The condition $B_1 \neq 0$ is equivalent to $c \neq 0$.
Using Proposition \ref{prop.logistic}, we then have that
\begin{equation}
    \begin{split}
        I_{p, 0}(A, B)=\int Du \, (1+e^{\beta H})^p
        \approx 
        \begin{cases}
            e^{f(p) }G\left( \frac{A}{B} + p B \right) 
            & -\frac{A_2}{B_1^2} < \frac{p}{2} 
            \\
            2 e^{f(p) }G\left( \frac{p}{2} B \right)
            & -\frac{A_2}{B_1^2} = \frac{p}{2} 
            \\
            G\left( - \frac{A}{B} \right)
            & -\frac{A_2}{B_1^2} > \frac{p}{2} 
        \end{cases}
    \end{split}
\end{equation}
and
\begin{equation}
    \begin{split}
        I_{p, 1}(A, B)=\int Du \, (1+e^{\beta H})^p \logistic\left(\beta H\right)
        \approx  
        \begin{cases}
                \frac{\exp\left(- \frac{A^2}{2 B^2}\right) }{\sqrt{2 \pi} \, N }K(p, 1, A_2, B_1)
                & p < -\frac{A_2}{B_1^2} < 1 \mathand p < 1
                \\
                e^{f(p) }G\left( \frac{A}{B} + p B \right) 
                & -\frac{A_2}{B_1^2} \leq p \mathand p < 1
                \\
                e^{f(1)}G\left( - \frac{A}{B} - B \right)
                & -\frac{A_2}{B_1^2} \geq 1 \mathand p < 1
                \\
                e^{f(p) }G\left( \frac{A}{B} + p B \right) 
                & -\frac{A_2}{B_1^2} < \frac{1 + p}{2} \mathand p \geq 1
                \\
                2 e^{f(p) }G\left( \frac{p-1}{2} B \right)
                & -\frac{A_2}{B_1^2} = \frac{1 + p}{2} \mathand p \geq 1
                \\
                e^{f(1)}G\left( - \frac{A}{B} - B \right)
                & -\frac{A_2}{B_1^2} > \frac{1 + p}{2} \mathand p \geq 1
        \end{cases}
    \end{split}
\end{equation}
and
\begin{equation}
    \begin{split}
        I_{p, 2}(A, B)=\int Du \, (1+e^{\beta H})^p \logistic\left(\beta H\right)^2
        \approx  
        \begin{cases}
                \frac{\exp\left(- \frac{A^2}{2 B^2}\right) }{\sqrt{2 \pi} \, N }K(p, 2, A_2, B_1)
                & p < -\frac{A_2}{B_1^2} < 2 \mathand p < 2
                \\
                e^{f(p) }G\left( \frac{A}{B} + p B \right) 
                & -\frac{A_2}{B_1^2} \leq p \mathand p < 2
                \\
                e^{f(2)}G\left( - \frac{A}{B} - 2 B \right)
                & -\frac{A_2}{B_1^2} \geq 2 \mathand p < 2
                \\
                e^{f(p) }G\left( \frac{A}{B} + p B \right) 
                & -\frac{A_2}{B_1^2} < \frac{2 + p}{2} \mathand p \geq 2
                \\
                2 e^{f(p) }G\left( \frac{p-2}{2} B \right)
                & -\frac{A_2}{B_1^2} = \frac{2 + p}{2} \mathand p \geq 2
                \\
                e^{f(2)}G\left( - \frac{A}{B} - 2 B \right)
                & -\frac{A_2}{B_1^2} > \frac{2 + p}{2} \mathand p \geq 2
        \end{cases}
    \end{split}
\end{equation}
and the equations to be solved are 
\begin{equation}
    \begin{split}
        I_{p, 1}(A, B) &=     m I_{p, 0}(A, B) \, ,\\
        I_{p, 2}(A, B) &= c^2 m I_{p, 0}(A, B)
    \end{split}
\end{equation}
and recall
\begin{equation}
    f(k) = kA + \frac{1}{2}B^2 k^2 \, .
\end{equation}

We consider the equations in the three cases $0< p < 1$, $1 \leq p < 2$ and $p \geq 2$ separately:
\begin{enumerate}
    \item \textbf{Case $0< p < 1$.} We have the subcases:
    
    \begin{center}
    \renewcommand{\arraystretch}{1.5}
    \begin{tabular}{c||c||c|c|c}
        Case & Condition & $\ell=0$ & $\ell=1$ & $\ell=2$ \\
        \hline
        1a &
        $-A_2/B_1^2 < p / 2$ & 
        $e^{f(p) }G\left( \frac{A}{B} + p B \right)$ &
        $e^{f(p) }G\left( \frac{A}{B} + p B \right)$ &
        $e^{f(p) }G\left( \frac{A}{B} + p B \right)$
        \\
        1b &
        $-A_2/B_1^2 = p / 2$ &
        $2 e^{f(p) }G\left( \frac{p}{2} B \right)$ &
        $e^{f(p) }G\left( \frac{A}{B} + p B \right)$ &
        $e^{f(p) }G\left( \frac{A}{B} + p B \right)$
        \\
        1c &
        $p/2 < -A_2/B_1^2 \leq p$ &
        $G\left( - \frac{A}{B} \right)$ &
        $e^{f(p) }G\left( \frac{A}{B} + p B \right)$ &
        $e^{f(p) }G\left( \frac{A}{B} + p B \right)$
        \\
        1d &
        $p < -A_2/B_1^2 < 1$ &
        $G\left( - \frac{A}{B} \right)$ &
        $\frac{\exp\left(- \frac{A^2}{2 B^2}\right) }{\sqrt{2 \pi} \, N }K(p, 1, A_2, B_1)$ &
        $\frac{\exp\left(- \frac{A^2}{2 B^2}\right) }{\sqrt{2 \pi} \, N }K(p, 2, A_2, B_1)$
        \\
        1e &
        $1 \leq -A_2/B_1^2 < 2$ &
        $G\left( - \frac{A}{B} \right)$ &
        $e^{f(1)}G\left( - \frac{A}{B} - B \right)$ &
        $\frac{\exp\left(- \frac{A^2}{2 B^2}\right) }{\sqrt{2 \pi} \, N }K(p, 2, A_2, B_1)$
        \\
        1f &
        $2 \leq -A_2/B_1^2$ &
        $G\left( - \frac{A}{B} \right)$ &
        $e^{f(1)}G\left( - \frac{A}{B} - B \right)$ &
        $e^{f(2)}G\left( - \frac{A}{B} - 2 B \right)$
        \\
    \end{tabular}
    \renewcommand{\arraystretch}{1.}
    \end{center}

    leading to the following equations (notice that if $- A_2 > 0$ than $G(-A/B)=1$):
    \begin{itemize}
        \item[1a)] Here all integrals are equal at leading order, so there exists no solution to the SP equations which is compatible with the scaling ansatz.
        \item[1b)] Same as 1a).
        \item[1c)] The equations give $c^2 = 1$ and $e^{f(p)} G\left( \frac{A}{B} + p B\right) = m$.
            We have
            \begin{equation}
                c^2 = 1 \implies B_1 = b \mathand A_2 = - b\mu \, .
            \end{equation}
            The second equation implies at leading logarithmic order
            \begin{equation}
                p A_2 + \frac{1}{2}B_1^2 p^2 = - 1 
                \implies 1 - b \mu p + \frac{1}{2}b^2 p^2 = 0
                \implies \mu = \frac{1}{bp} + \frac{bp}{2} \, .
            \end{equation}
        \item[1d)] The equations give 
            \begin{equation}
                \frac{\exp\left(- \frac{A^2}{2 B^2}\right) }{\sqrt{2 \pi \log 1/m}}K(p, 1, A_2, B_1) = m
                \mathand
                c^2 = K(p, 2, A_2, B_1) / K(p, 1, A_2, B_1) \, .
            \end{equation}
            For this to be satisfied, we need to impose for some $\gamma > 0$
            \begin{equation}
                \exp\left(- \frac{A^2}{2 B^2}\right) = \gamma m \sqrt{2 \pi \log 1/m}
                \implies
                - \frac{A^2}{2 B^2} = \log \left( \gamma m \sqrt{2 \pi \log 1/m} \right)
            \end{equation}
            which at leading logarithmic order gives 
            \begin{equation}
                \frac{A_2^2}{2 B_1^2} = 1 \implies A_2 = - \sqrt{2} B_1
            \end{equation}
            Moreover, we have 
            \begin{equation}
                \gamma K(p, 1, - \sqrt{2} bc, bc) = 1 \mathand 
                c^2 = K(p, 2, - \sqrt{2} bc, bc) / K(p, 1, - \sqrt{2} bc, bc)
            \end{equation}
            We thus see that for each values of $b, c, p$, one can tune $\gamma$ such that the last equation is satisfied (notice that $K(p, 1, \dots) > 0$ in this region by construction). Then one needs to solve the second equation for $c$, which in conjunction with the equation for $A_2$ provides an equation for $\mu$, i.e.
            \begin{equation}
                \mu = \frac{b}{2} (1-c^2) - \sqrt{2} c \, .
            \end{equation}
        \item[1e)]
            The equations give 
            \begin{equation}
                e^{f(1)}G\left( - \frac{A}{B} - B \right) = m 
                \mathand
                \frac{\exp\left(- \frac{A^2}{2 B^2}\right) }{\sqrt{2 \pi} \, N }K(p, 2, A_2, B_1)= c^2 m \, .
            \end{equation}
            The first equation implies 
            \begin{equation}
                A_2 + \frac{1}{2}B_1^2 = -1 
                \implies 
                - b \mu + \frac{b^2}{2} (1-c^2) + \frac{b^2c^2}{2} = -1
                \implies 
                \mu = \frac{1}{b} + \frac{b}{2} \, .
            \end{equation}
            The second equation implies $c \to 0$, as the exponent is at leading order
            \begin{equation}
                - \frac{A_2^2}{2 B_1^2} \log\frac{1}{m} = \frac{1}{2} \left( \frac{1}{B_1} + \frac{B_1}{2} \right)^2 \log m \to - \infty
            \end{equation}
            giving back the RS solution. 
        
        \item[1f)]
            The equations give 
            \begin{equation}
                e^{f(1)} G\left( - \frac{A}{B} - B \right) = m \mathand
                e^{f(2)} G\left( - \frac{A}{B} - 2 B \right) = c^2 m \, .
            \end{equation}
            The first equation implies 
            \begin{equation}
                A_2 + \frac{1}{2}B_1^2 = -1 
                \implies 
                - b \mu + \frac{b^2}{2} (1-c^2) + \frac{b^2c^2}{2} = -1
                \implies 
                \mu = \frac{1}{b} + \frac{b}{2} \, .
            \end{equation}
            The second equation implies $c \to 0$, as the exponent is at leading order
            \begin{equation}
                \left( 2 A_2 + 2 B_1^2 \right) \log \frac{1}{m} 
                = \left( 2 - B_1^2 \right) \log m \to - \infty
            \end{equation}
            where we used that
            \begin{equation}
                2 B_1^2 \leq -A_2 = 1 + \frac{1}{2}B_1^2 \implies B_1^2 \leq \frac{2}{3}
            \end{equation}
            giving back the RS solution. 
        
    \end{itemize} 
    We summarise the solutions for $0 < p <1$ in the following table:
    \begin{center}
    \renewcommand{\arraystretch}{1.5}
    \begin{tabular}{c||c||c}
        Case & Condition & Solution of SP equations \\
        \hline
        1a, 1b &
        $-A_2/B_1^2 \leq p / 2$ & 
        No solution 
        \\
        1c &
        $p/2 < -A_2/B_1^2 \leq p$  &
        $c^2 = 1$ and $\mu = \frac{1}{bp} + \frac{bp}{2}$ valid for $b p \geq \sqrt{2}  $
        \\
        1d &
        $p < -A_2/B_1^2 < 1$  &
        $c^2 = \frac{K(p, 2, - \sqrt{2} bc, bc)}{K(p, 1, - \sqrt{2} bc, bc)}$ and 
        $\mu = \frac{b}{2} (1-c^2) - \sqrt{2} c$ valid for $ bcp < \sqrt{2}  < bc $
        \\
        1e, 1f&
        $-A_2/B_1^2 \geq 1$ &
        RS solution $c^2 = 0$ and $\mu = \frac{1}{b} + \frac{b}{2}$
        \\
    \end{tabular}
    \renewcommand{\arraystretch}{1.}
    \end{center}

    \item \textbf{Case $1\leq p < 2$.} We have the subcases:
    
    \begin{center}
    \renewcommand{\arraystretch}{1.5}
    \begin{tabular}{c||c||c|c|c}
        Case & Condition & $\ell=0$ & $\ell=1$ & $\ell=2$ \\
        \hline
        2a & 
        $-A_2/B_1^2 < p/2$ &
        $e^{f(p) }G\left( \frac{A}{B} + p B \right)$ &
        $e^{f(p) }G\left( \frac{A}{B} + p B \right)$ &
        $e^{f(p) }G\left( \frac{A}{B} + p B \right)$
        \\
        2b & 
        $-A_2/B_1^2 = p/2$ &
        $2 e^{f(p) }G\left( \frac{p}{2} B \right)$ &
        $e^{f(p) }G\left( \frac{A}{B} + p B \right)$ &
        $e^{f(p) }G\left( \frac{A}{B} + p B \right)$
        \\
        2c & 
        $p/2 < -A_2/B_1^2 < (p+1)/2$ &
        $G\left( - \frac{A}{B} \right)$ &
        $e^{f(p) }G\left( \frac{A}{B} + p B \right)$ &
        $e^{f(p) }G\left( \frac{A}{B} + p B \right)$
        \\
        2d & 
        $-A_2/B_1^2 = (p+1)/2$ &
        $G\left( - \frac{A}{B} \right)$ &
        $2 e^{f(p) }G\left( \frac{p-1}{2} B \right)$ &
        $e^{f(p) }G\left( \frac{A}{B} + p B \right)$
        \\
        2e & 
        $(p+1)/2 < -A_2/B_1^2  \leq p$ &
        $G\left( - \frac{A}{B} \right)$ &
        $e^{f(1)}G\left( - \frac{A}{B} - B \right)$ &
        $e^{f(p) }G\left( \frac{A}{B} + p B \right)$
        \\
        2f & 
        $p < -A_2/B_1^2 < 2$ &
        $G\left( - \frac{A}{B} \right)$ &
        $e^{f(1)}G\left( - \frac{A}{B} - B \right)$ &
        $\frac{\exp\left(- \frac{A^2}{2 B^2}\right) }{\sqrt{2 \pi} \, N }K(p, 2, A_2, B_1)$
        \\
        2g & 
        $2 \leq -A_2/B_1^2$ &
        $G\left( - \frac{A}{B} \right)$ &
        $e^{f(1)}G\left( - \frac{A}{B} - B \right)$ &
        $e^{f(2)}G\left( - \frac{A}{B} - 2 B \right)$
        \\
    \end{tabular}
    \renewcommand{\arraystretch}{1.}
    \end{center}

    leading to the following equations (notice that if $- A_2 > 0$ than $G(-A/B)=1$):
    \begin{itemize}
        \item[2a)] Same as case 1a).
        \item[2b)] Same as case 1a).
        \item[2c)] Same as case 1c).
        \item[2d)] The equations are 
             \begin{equation}
                 2 e^{f(p) }G\left( \frac{p-1}{2} B \right) = m 
                 \mathand 
                 e^{f(p) }G\left( \frac{A}{B} + p B \right) = c^2 m \, .
             \end{equation}
             We start noticing that 
             \begin{equation}
                \frac{A_2}{B_1^2} + p = \frac{p-1}{2}                
             \end{equation}
             so that the two $G$ factors are equal, and the solution is the same as case 1c).            
        \item[2e)] The equations are 
            \begin{equation}
                e^{f(1)}G\left( - \frac{A}{B} - B \right) = m \mathand e^{f(p) }G\left( \frac{A}{B} + p B \right) = c^2m \, .
            \end{equation}
            The first can be solved following case 1e), giving
            \begin{equation}
                A_2 + \frac{1}{2}B_1^2 = -1 
                \implies 
                \mu = \frac{1}{b} + \frac{b}{2} \, .
            \end{equation}
            The second equation implies $c \to 0$ as $f(1) > f(p)$ due to the condition $(p+1)/2 < -A_2 / B_1^2 \leq p$ 
            implying that the l.h.s. is not of order $m$ and giving back the RS solution. 
        \item[2f)] The equations are 
            \begin{equation}
                e^{f(1)}G\left( - \frac{A}{B} - B \right) = m \mathand \frac{\exp\left(- \frac{A^2}{2 B^2}\right) }{\sqrt{2 \pi} \, N }K(p, 2, A_2, B_1) = c^2m \, .
            \end{equation}
            The first can be solved following case 2e). The second equation again implies $c^2 \to 0$ giving the RS solution.
        \item[2g)]
            The equations are 
                \begin{equation}
                    e^{f(1)}G\left( - \frac{A}{B} - B \right) = m \mathand e^{f(2)}G\left( - \frac{A}{B} - 2 B \right) = c^2m \, .
                \end{equation}
                The first can be solved following case 2e). The second equation again implies $c^2 \to 0$ giving the RS solution.
    \end{itemize} 
    We summarise the solutions for $1 \leq p <2$ in the following table:
    \begin{center}
    \renewcommand{\arraystretch}{1.5}
    \begin{tabular}{c||c||c}
        Case & Condition & Solution of SP equations \\
        \hline
        2a, 2b &
        $-A_2/B_1^2 \leq p / 2$ & 
        No solution 
        \\
        2c, 2d & 
        $p/2 < -A_2/B_1^2 \leq (p+1)/2$  & 
        $c^2 = 1$ and $\mu = \frac{1}{bp} + \frac{bp}{2}$ valid for $b^2 p \geq 2$
        \\
        2e, 2f, 2g& 
        $(p+1)/2 < -A_2/B_1^2 $ &
        RS solution $c^2 = 0$ and $\mu = \frac{1}{b} + \frac{b}{2}$
    \end{tabular}
    \renewcommand{\arraystretch}{1.}
    \end{center}

    \item \textbf{Case $p \geq 2$.} We have the subcases:
    
    \begin{center}
    \renewcommand{\arraystretch}{1.5}
    \begin{tabular}{c||c||c|c|c}
        Case & Condition & $\ell=0$ & $\ell=1$ & $\ell=2$ \\
        \hline
        3a & 
        $-A_2/B_1^2 < p/2$ &
        $e^{f(p) }G\left( \frac{A}{B} + p B \right)$ &
        $e^{f(p) }G\left( \frac{A}{B} + p B \right)$ &
        $e^{f(p) }G\left( \frac{A}{B} + p B \right)$
        \\
        3b & 
        $-A_2/B_1^2 = p/2$ &
        $2 e^{f(p) }G\left( \frac{p}{2} B \right)$ &
        $e^{f(p) }G\left( \frac{A}{B} + p B \right)$ &
        $e^{f(p) }G\left( \frac{A}{B} + p B \right)$
        \\
        3c & 
        $p/2 < -A_2/B_1^2 < (p+1)/2$ &
        $G\left( - \frac{A}{B} \right)$ &
        $e^{f(p) }G\left( \frac{A}{B} + p B \right)$ &
        $e^{f(p) }G\left( \frac{A}{B} + p B \right)$
        \\
        3d & 
        $-A_2/B_1^2 = (p+1)/2$ &
        $G\left( - \frac{A}{B} \right)$ &
        $2 e^{f(p) }G\left( \frac{p-1}{2} B \right)$ &
        $e^{f(p) }G\left( \frac{A}{B} + p B \right)$
        \\
        3e & 
        $(p+1)/2 < -A_2/B_1^2 < (p+2)/2$ &
        $G\left( - \frac{A}{B} \right)$ &
        $e^{f(1)}G\left( - \frac{A}{B} - B \right)$ &
        $e^{f(p) }G\left( \frac{A}{B} + p B \right)$
        \\
        3f & 
        $ -A_2/B_1^2 = (p+2)/2$ &
        $G\left( - \frac{A}{B} \right)$ &
        $e^{f(1)}G\left( - \frac{A}{B} - B \right)$ &
        $2 e^{f(p) }G\left( \frac{p-2}{2} B \right)$
        \\
        3g & 
        $ (p+2)/2 \leq -A_2/B_1^2  $ &
        $G\left( - \frac{A}{B} \right)$ &
        $e^{f(1)}G\left( - \frac{A}{B} - B \right)$ &
        $e^{f(2)}G\left( - \frac{A}{B} - 2 B \right)$
        \\
    \end{tabular}
    \renewcommand{\arraystretch}{1.}
    \end{center}

     leading to the following equations (notice that if $- A_2 > 0$ than $G(-A/B)=1$):
    \begin{itemize}
        \item[3a)] Same as case 1a).
        \item[3b)] Same as case 1a).
        \item[3c)] Same as case 2c).
        \item[3d)] Same as case 2d).
        \item[3e)] Same as case 2e).
        \item[3f)] Same as case 2e).
        \item[3g)] Same as case 3g).
    \end{itemize} 
        We summarise the solutions for $p \geq 2$ in the following table:
    \begin{center}
    \renewcommand{\arraystretch}{1.5}
    \begin{tabular}{c||c||c}
        Case & Condition & Solution of SP equations \\
        \hline
        3a, 3b &
        $-A_2/B_1^2 \leq p / 2$ & 
        No solution 
        \\
        3c, 3d & 
        $p/2 < -A_2/B_1^2 \leq (p+1)/2$  & 
        $c^2 = 1$ and $\mu = \frac{1}{bp} + \frac{bp}{2}$ valid for $b^2 p \geq 2$
        \\
        3e, 3f, 3g& 
        $(p+1)/2 < -A_2/B_1^2 $ &
        RS solution $c^2 = 0$ and $\mu = \frac{1}{b} + \frac{b}{2}$
    \end{tabular}
    \renewcommand{\arraystretch}{1.}
    \end{center}
\end{enumerate}

To sum up all solution found, we have the RS solution, plus
\begin{itemize}
    \item $c^2 = 1$ and $\mu = \frac{1}{bp} + \frac{bp}{2}$ valid for $b \geq \max(\sqrt{2} / p, \sqrt{2/p})$.
    \item 
    $c^2 = \frac{K(p, 2, - \sqrt{2} bc, bc)}{K(p, 1, - \sqrt{2} bc, bc)}$ and 
        $\mu = \frac{b}{2} (1-c^2) - \sqrt{2} c$ valid for $ bcp < \sqrt{2}  < bc $ and $0<p<1$.
\end{itemize}

\subsubsection{Stability}
We can immediately discuss the stability of the 1-RSB solutions found.
The condition is 
\begin{equation}
    \begin{split}
        \beta^2 \int Du \, \frac{\int Dv \, (1+e^{\beta H})^p \left[ \ell(u,v)^4 - 2 \ell(u,v)^3 + \ell(u,v)^2 \right] }{\int Dv \, (1+e^{\beta H})^p} < 1,
    \end{split}
\end{equation}
with $\ell(u,v) = \logistic\left(\beta H(u,v)\right)$, and in the scaling limit it reads
\begin{equation}
    \begin{split}
        b^2 \frac{1}{m} \log \frac{1}{m} \int Dv \, \left(1 + e^{\beta H}\right)^p  \left[
        \ell(\beta H)^4
        - 2 \ell(\beta H)^3
        + \ell(\beta H)^2
        \right] < 1
    \end{split}
\end{equation}
where all denominators and subleading terms in the scaling of $\beta$ were discarded at leading order.
\begin{itemize}
    \item In the RS solution, all integrals of powers of $\ell(u, v)$ are subleading, i.e. they are much smaller than $m$. 
    Thus, their scaling produces an overall positive power of $m$, so that the l.h.s. vanishes as $m \to 0$. 
    Thus the RS solution is stable.
    \item In the $c^2 = 1$ solution, all integrals of power of $\ell(u, v)$ are the same at leading order, so that the same argument as for the RS solution applies. This solution is stable for all $p$.
    \item In the $c^2 = \frac{K(p, 2, - \sqrt{2} bc, bc)}{K(p, 1, - \sqrt{2} bc, bc)}$ solution, the stability condition reads 
        \begin{equation}
            b^2 \log \frac{1}{m} \frac{K(p, 4, - \sqrt{2} bc, bc) -2 K(p, 3, - \sqrt{2} bc, bc) + K(p, 2, - \sqrt{2} bc, bc)}{K(p, 1, - \sqrt{2} bc, bc)} < 1 \, .
        \end{equation}
    Notice also that by construction the l.h.s. of the condition is positive, as the integrand is a positive function times a perfect square.
    Thus, this solution is stable if and only if 
    \begin{equation}
        K(p, 4, - \sqrt{2} bc, bc) -2 K(p, 3, - \sqrt{2} bc, bc) + K(p, 2, - \sqrt{2} bc, bc) = 0
    \end{equation}
    This equation should be thought of as a condition linking $p$ and $b$. This solution may be stable only on a line in the $(p, b)$ plane, and for this reason we discard it.
\end{itemize}

\subsubsection{Complexity}
We need to compute the complexity $\Sigma(p, b)$ only for the stable solution $c^2 = 1$, $\mu = \frac{1}{bp} + \frac{bp}{2}$, valid for $b \geq \max(\sqrt{2} / p, \sqrt{2/p})$.
Recall that 
\begin{equation}
    \begin{split}
        \Sigma(p, b) &= 
        \frac{1}{4} p^2 b^2 c^4 m \log \frac{1}{m} 
        -
        p \int Dv \, (1+e^{\beta H})^p \logpexp(\beta H)
        +
        \log \int Dv \, (1+e^{\beta H})^p
    \end{split}
\end{equation}
so we need to compute the leading asymptotic scaling of 
\begin{equation}
    \int Dv \, (1+e^{\beta H})^p \logpexp(\beta H) 
    \mathand
    \int Dv \, (1+e^{\beta H})^p \, .
\end{equation}
It is immediate to see that the second term is subleading w.r.t. the non-integral term of the complexity.
The first term can be treated using Proposition \ref{prop.log1pexp}, as well as the notations defined in the previous section.
Recall that in this solution $f(p) = \log m$. Then, for $0\leq p <1$
\begin{equation}
    \begin{split}
        \int Dv \, (1+e^{\beta H})^p \logpexp(\beta H) 
        \approx
        \left[A_2 + p B_1^2 \right] m \log \frac{1}{m} G\left( \frac{A}{B} + p B \right)
    \end{split}
\end{equation}
where we used that if $- \frac{A_2}{B_1^2} > 1 > p$ then $f(1) < f(p)$, that $f(p) \geq -\frac{A^2}{2 B^2}$ always holds and that in this solution $- \frac{A_2}{B_1^2} \leq p$.
For $p \geq 1$ instead
\begin{equation}
    \begin{split}
        \int Dv \, (1+e^{\beta H})^p \logpexp(\beta H) 
        \approx
        \left[A_2 + p B_1^2 \right] m \log \frac{1}{m}      \end{split}
\end{equation}
where we used that in this solution and for $p > 1$ then $p/2 < - \frac{A_2}{B_1^2} \leq (p+1)/2 < p $, implying also that $f(1) < f(p)$.
Thus, for $bp > \sqrt{2}$
\begin{equation}
    \frac{\Sigma(p, b)}{\log \frac{1}{m}} = 
        \frac{1}{4} p^2 b^2 m 
        -
        p \left[-b \mu + p b^2 \right]     =     1 - \frac{(bp)^2}{4}  
\end{equation}
while for $bp = \sqrt{2}$ a non-trivial contribution from a $G$ function must be taken into account.

\subsection{Derivation of the asymptotic behaviour of relevant integrals}

\begin{proposition}\label{prop.logistic}
    Consider the integral ($p, \ell \in \bbR$)
    \begin{equation}
        I_{p, \ell}(A, B) 
        = \int Dv \, (1 + e^{A_N + B_N v})^{p-\ell} e^{\ell(A_N + B_N v)} \, ,
    \end{equation}
    where $Dv$ is the standard Gaussian measure.
        Define
    \begin{equation}
        A_2 = \lim_{N \to \infty} \frac{A_N}{N^2} 
        \mathand
        B_1 = \lim_{N \to \infty} \frac{B_N}{N} 
    \end{equation}
    and suppose that $B_1 > 0$ (if $B_1 < 0$, change variable $v \to -v$ in the integral) and $|A_2| < +\infty$.
    Then, as $N \to \infty$, the leading order of $I_{p, \ell}(A, B)$ satisfies
    \begin{equation}
        \begin{split}
            I_{p, \ell}(A, B) \approx 
            \begin{cases}
                \frac{\exp\left(- \frac{A^2}{2 B^2}\right) }{\sqrt{2 \pi} \, N }K(p, \ell, A_2, B_1)
                & p < -\frac{A_2}{B_1^2} < \ell
                \\
                e^{f(p) }G\left( \frac{A}{B} + p B \right) 
                & -\frac{A_2}{B_1^2} \leq p
                \\
                e^{f(\ell)}G\left( - \frac{A}{B} - \ell B \right)
                & -\frac{A_2}{B_1^2} \geq \ell
            \end{cases}
        \end{split}
    \end{equation}
    for $p < \ell$,  and
    \begin{equation}
        \begin{split}
            I_{p, \ell}(A, B) \approx 
            \begin{cases}
                e^{f(p) }G\left( \frac{A}{B} + p B \right) 
                & -\frac{A_2}{B_1^2} < \frac{\ell + p}{2} 
                \\
                2 e^{f(\ell)}G\left( \frac{p - \ell}{2} B\right)
                & -\frac{A_2}{B_1^2} = \frac{\ell + p}{2} 
                \\
                e^{f(\ell)}G\left( - \frac{A}{B} - \ell B \right)
                & -\frac{A_2}{B_1^2} > \frac{\ell + p}{2} 
            \end{cases}
        \end{split}
    \end{equation}
    for $p \geq \ell$, 
    where
    \begin{equation}
        \begin{split}
            f(k) = kA + \frac{1}{2}B^2 k^2 \, ,
        \end{split}
    \end{equation}
    \begin{equation}
        K(p, \ell, a, b) = \frac{b \, _2F_1\left(\frac{a}{b^2}+\ell,\ell-p;\frac{a}{b^2}+\ell+1;-1\right)}{a+b^2 \ell}-\frac{b \, _2F_1\left(-\frac{a}{b^2}-p,\ell-p;-\frac{a}{b^2}-p+1;-1\right)}{a+b^2 p} \, , 
    \end{equation}
    \begin{equation}
        G(z) = \frac{1}{2} \left(1 + \erf \left( \frac{z}{\sqrt{2}} \right)\right) \, ,
    \end{equation}
    and $G(+\infty) = 1$.
\end{proposition}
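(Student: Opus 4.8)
The plan is to reduce the Gaussian integral over $v$ to a one-dimensional Laplace analysis after the change of variables $x = A_N + B_N v$ (assuming $B_N>0$, as allowed), which turns the two regimes of the integrand into the single statement
\begin{equation}
  I_{p,\ell}(A,B) = \frac{1}{\sqrt{2\pi}\,B_N}\int_{-\infty}^{\infty} e^{\Psi(x)}\,dx, \qquad
  \Psi(x) = (p-\ell)\log(1+e^{x}) + \ell x - \frac{(x-A_N)^2}{2B_N^2}.
\end{equation}
The interpolating part $(p-\ell)\log(1+e^x)+\ell x$ has slope $s(x) = (p-\ell)\sigma(x)+\ell$, where $\sigma(x)=1/(1+e^{-x})$ is the logistic function, and $s$ runs monotonically between $\ell$ (at $x\to-\infty$) and $p$ (at $x\to+\infty$). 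First I would locate the critical points of $\Psi$, which solve $s(x)=(x-A_N)/B_N^2$; since the right-hand side equals $-A_N/B_N^2\to -A_2/B_1^2 =: t$ on the $O(1)$ scale, an interior critical point at finite $x$ exists precisely when $t$ lies strictly between $p$ and $\ell$. The decisive observation is that $\Psi''(x)\approx (p-\ell)\,\sigma(x)\bigl(1-\sigma(x)\bigr)$ at leading order, so the sign of $p-\ell$ decides whether this interior critical point is a maximum or a minimum, which is exactly what splits the proposition into its two halves.

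For $p<\ell$ the interior critical point is a genuine maximum, and it controls the integral when $p<t<\ell$. There I would expand the Gaussian factor on the $O(1)$ scale, $-(x-A_N)^2/(2B_N^2) = -A^2/(2B^2) - t x + O(x^2/B_N^2)$, so that the leading contribution factorises into the prefactor $e^{-A^2/(2B^2)}/(\sqrt{2\pi}\,N)$ times the convergent Beta-type integral $\int_{-\infty}^{\infty}(1+e^x)^{p-\ell}e^{(\ell-t)x}\,dx = B(\ell-t,\,t-p)$; rewriting the two incomplete-Beta pieces in hypergeometric form then reproduces $K(p,\ell,A_2,B_1)$. When instead $t\le p$ or $t\ge\ell$, no interior maximum survives and the mass concentrates where $\Psi$ is dominated by $e^{px}$ (respectively $e^{\ell x}$). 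Completing the square, $px-(x-A_N)^2/(2B_N^2) = f(p) - \tfrac{1}{2B_N^2}\bigl(x - A_N - pB_N^2\bigr)^2$, and integrating the resulting Gaussian against the half-line on which that approximation is valid yields $e^{f(p)}\,G\!\left(\tfrac{A}{B}+pB\right)$, with the error-function factor being the truncated Gaussian tail measured from the crossover $x=0$ (and symmetrically $e^{f(\ell)}\,G\!\left(-\tfrac{A}{B}-\ell B\right)$).

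For $p\ge\ell$ the interior critical point is a minimum, so the two tail saddles compete and I would keep both contributions,
\begin{equation}
  I_{p,\ell}(A,B) \approx e^{f(p)}\,G\!\left(\tfrac{A}{B}+pB\right) + e^{f(\ell)}\,G\!\left(-\tfrac{A}{B}-\ell B\right).
\end{equation}
Comparing exponents through $f(p)-f(\ell) = (p-\ell)B^2\bigl(\tfrac{p+\ell}{2}-t\bigr) + o(B^2)$ shows that the $p$-term wins for $t<\tfrac{p+\ell}{2}$ and the $\ell$-term for $t>\tfrac{p+\ell}{2}$; at the balance point $t=\tfrac{p+\ell}{2}$ both exponents coincide and both $G$-arguments collapse to $\tfrac{p-\ell}{2}B$, producing the factor $2\,e^{f(\ell)}G\!\left(\tfrac{p-\ell}{2}B\right)$. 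Read together with the $p<\ell$ analysis, this recovers every line of the case table once one records which single term is of largest order.

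I expect the main obstacle to be the uniform error control, most acutely at the boundary values $t=p$, $t=\ell$ and $t=\tfrac{p+\ell}{2}$, where the dominant saddle sits only an $O(1)$ distance from the logistic crossover at $x=0$, so the pointwise replacement of $(1+e^x)^{p-\ell}$ by either $e^{(p-\ell)x}$ or $1$ is no longer accurate at the relevant scale. The resolution is to \emph{not} take the error function to its limiting value but to keep the factor $G$ exact, so that a single expression interpolates continuously across each boundary; in particular $G\to\tfrac12$ at threshold supplies the $\tfrac12$ and the $2$ factors. What then remains is to verify that the neglected terms --- the curvature piece $-x^2/(2B_N^2)$ and the exponentially small corrections to $\sigma$ away from the crossover --- contribute only at relative order $1/N$, so that they do not disturb the leading asymptotics claimed.
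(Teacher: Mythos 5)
Your proposal is correct, and it reaches every branch of the case table by a genuinely different route than the paper. The paper never performs a saddle-point analysis of the full exponent: it writes two uniformly convergent binomial expansions of $(1+e^{C})^{p-\ell}e^{\ell C}$ on either side of the crossover $v_*=-A/B$, integrates term by term (each term producing an $e^{f(k)}G(\cdot)$ factor), and then applies the tail asymptotics of $G$ to each term; in the regime $p<-A_2/B_1^2<\ell$ \emph{all} terms of both series contribute at the common order $e^{-A^2/(2B^2)}/N$, and the resummation of the resulting series of reciprocals is what produces the two $_2F_1$'s in $K$. You instead do a direct Laplace analysis after the substitution $x=A_N+B_Nv$: the max/min character of the interior critical point (sign of $\Psi''\approx(p-\ell)\sigma(1-\sigma)$) explains \emph{why} the statement splits at $p=\ell$, the two tail saddles with their exact $G$ factors give the outer branches, and the interior-maximum case collapses to the single convergent integral $\int(1+e^x)^{p-\ell}e^{(\ell-t)x}dx=B(\ell-t,\,t-p)$ with $t=-A_2/B_1^2$. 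This Beta form is equivalent to the paper's $K$: substituting $u=e^x$, splitting $\int_0^\infty u^{\ell-t-1}(1+u)^{p-\ell}du$ at $u=1$ and mapping $u\to1/u$ on the upper piece yields the two incomplete-Beta integrals $\int_0^1 u^{c-1}(1+u)^{p-\ell}du={}_2F_1(c,\ell-p;c+1;-1)/c$ with $c=\ell-t$ and $c=t-p$, i.e. exactly $B_1K(p,\ell,A_2,B_1)$ — it would strengthen the write-up to display this identity rather than assert it. What each approach buys: yours is more conceptual (the case structure, including the competition at $t=(p+\ell)/2$, falls out of comparing $f(p)$ and $f(\ell)$) and yields the compact Beta-function answer; the paper's series bookkeeping is more mechanical but tracks all terms explicitly and lands directly on the hypergeometric form in the statement. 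Your handling of the boundary cases — keeping $G$ exact so a single expression interpolates across $t=p$, $t=\ell$, $t=(p+\ell)/2$ — is precisely what the paper's formulas do as well, and your closing caveat about error control (crossover-region corrections entering at relative order $1/N$) is at the same level of rigor as the paper's own argument, which also does not make these estimates rigorous.
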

\begin{proof}
    The first step is to expand the one-plus-exp term. 
    Define $C(v) = A + Bv$ and notice that it is an increasing function of $v$, with unique zero in $v_* = -A/B = \caO_N(N)$.
    We have two uniformly convergent expansions (valid for all $p, \ell \in \bbR$) for $v < v_*$ and $v > v_*$ respectively:
\begin{equation}
    \begin{split}
        (1+e^{C(v)})^{p-\ell} e^{\ell C(v)}
        &= \sum_{n \geq 0} \binom{p-\ell}{n} e^{(n+\ell) C(v)} \mathfor v < v_* \\
        (1+e^{C(v)})^{p-\ell}  e^{\ell C(v)}
        &= (1+e^{-C(v)})^{p-\ell}  e^{pC(v)}
        = \sum_{n \geq 0} \binom{p-\ell}{n} e^{(p - n)C(v)} \mathfor v > v_* \\
    \end{split}
\end{equation}
The two expansions can be integrated in the respective domains term-by-term using the following integrals
\begin{equation}
    \begin{split}
        \int_{-\infty}^{x_1} Dv \, e^{x_2v} &= e^{x_2^2/2} G\left( x_1-x_2 \right)
        \\
        \int_{x_1}^{+\infty} Dv \, e^{x_2v} &= e^{x_2^2/2} G\left( x_2-x_1 \right)
        \\
        G(z) &= \frac{1}{2} \left(1 + \erf \left( \frac{z}{\sqrt{2}} \right)\right)
    \end{split}
\end{equation}
so that
\begin{equation}
    \begin{split}
        I_{p, \ell}(A, B) 
        &=
        \sum_{n \geq 0} \binom{p-\ell}{n} 
        e^{(n+\ell) A + \frac{1}{2}(n+\ell)^2 B^2}
        G\left( - \frac{A}{B} - (n+\ell) B \right)
        \\
        &\quad+
        \sum_{n \geq 0} \binom{p-\ell}{n} e^{(p - n) A+ \frac{1}{2}(p-n)^2 B^2}
        G\left( (p-n)B + \frac{A}{B} \right) \, ,
    \end{split}
\end{equation}
where, in both cases, the order of the argument of the $G$ function is a most $\caO_N(N)$.
Then, we use the following asymptotic expansion for the $G$ functions
\begin{equation}
    G(z_1 N + z_0 + z_{-1} / N + \dots) \sim
    \begin{cases}
        1 & z_1 > 0 \\ 
        G(z_1) & z_1 = 0 \\
        \frac{1}{\sqrt{2 \pi} \, |z_1| N }
        \exp\left( -\frac{1}{2}
        \left(
            z_1^2 N^2
            +2 z_1 z_0 N
            + z_0^{-1}
            +2 z_1 z_{-1}
            + \dots
        \right) \right) & z_1 < 0  
    \end{cases} \, ,
\end{equation}
where again one needs to be careful in considering all non-vanishing subleading orders in the exponential.
Notice that with the abuse of notation $G(+\infty) = 1$ we have 
\begin{equation}
    G(z = z_1 N + z_0 + z_{-1} / N + \dots) \sim
    \begin{cases}
        G(z) & z_1 \geq 0 \\ 
        \left(\sqrt{2 \pi} \, |z_1| N \right)^{-1}
        \exp\left( -z^2 / 2\right) & z_1 < 0  
    \end{cases} \, ,
\end{equation}
So we have that
\begin{equation}
    \begin{split}
        e^{k A + \frac{1}{2}k^2 B^2}
        G\left( - \frac{A}{B} - k B \right)
        &\sim
        \begin{cases}
            e^{k A + \frac{1}{2}k^2 B^2}G\left( - \frac{A}{B} - k B \right) 
            & - \left( k B_1 + \frac{A_2}{B_1} \right)  \geq 0 \\
            \left(\sqrt{2 \pi} \, |k B_1 + \frac{A_2}{B_1}| N \right)^{-1}
            \exp\left(
                - \frac{1}{2} \frac{A^2}{B^2}
            \right)  & - \left( k B_1 + \frac{A_2}{B_1} \right)  < 0  
        \end{cases} \, ,
    \end{split}
\end{equation}
and 
\begin{equation}
    \begin{split}
        e^{k A + \frac{1}{2}k^2 B^2}
        G\left( \frac{A}{B} + k B \right)
        &\sim
        \begin{cases}
            e^{k A + \frac{1}{2}k^2 B^2}G\left( \frac{A}{B} + k B \right) 
            & \left( k B_1 + \frac{A_2}{B_1} \right)  \geq 0 \\
            \left(\sqrt{2 \pi} \, |k B_1 + \frac{A_2}{B_1}| N \right)^{-1}
            \exp\left(
                - \frac{1}{2} \frac{A^2}{B^2}
            \right)  & \left( k B_1 + \frac{A_2}{B_1} \right)  < 0  
        \end{cases} \, ,
    \end{split}
\end{equation}
so that
\begin{equation}
    \begin{split}
        I_{p, \ell}(A, B) 
        &\approx
        \sum_{n \geq 0} \binom{p-\ell}{n} 
        \begin{cases}
            e^{(n+\ell) A + \frac{1}{2}(n+\ell)^2 B^2}G\left( - \frac{A}{B} - (n+\ell) B \right) 
            & \left( (n+\ell) B_1 + \frac{A_2}{B_1} \right)  \leq 0 \\
            \left(\sqrt{2 \pi} \, |(n+\ell) B_1 + \frac{A_2}{B_1}| N \right)^{-1}
            \exp\left(
                - \frac{1}{2} \frac{A^2}{B^2}
            \right)  & \left( (n+\ell) B_1 + \frac{A_2}{B_1} \right)  > 0  
        \end{cases}
        \\
        &\quad+
        \sum_{n \geq 0} \binom{p-\ell}{n} 
        \begin{cases}
            e^{(p-n) A + \frac{1}{2}(p-n)^2 B^2}G\left( \frac{A}{B} + (p-n) B \right) 
            & \left( (p-n) B_1 + \frac{A_2}{B_1} \right)  \geq 0 \\
            \left(\sqrt{2 \pi} \, |(p-n) B_1 + \frac{A_2}{B_1}| N \right)^{-1}
            \exp\left(
                - \frac{1}{2} \frac{A^2}{B^2}
            \right)  & \left( (p-n) B_1 + \frac{A_2}{B_1} \right)  < 0  
        \end{cases} 
        \\
        &=
        \sum_{n \geq 0} \binom{p-\ell}{n} 
        \begin{cases}
            e^{(n+\ell) A + \frac{1}{2}(n+\ell)^2 B^2}G\left( - \frac{A}{B} - (n+\ell) B \right) 
            & n \leq -\frac{A_2}{B_1^2} - \ell \\
            \left(\sqrt{2 \pi} \, |(n+\ell) B_1 + \frac{A_2}{B_1}| N \right)^{-1}
            \exp\left(
                - \frac{1}{2} \frac{A^2}{B^2}
            \right)  & n > -\frac{A_2}{B_1^2} - \ell
        \end{cases}
        \\
        &\quad+
        \sum_{n \geq 0} \binom{p-\ell}{n} 
        \begin{cases}
            e^{(p-n) A + \frac{1}{2}(p-n)^2 B^2}G\left( \frac{A}{B} + (p-n) B \right) 
            & n \leq \frac{A_2}{B_1^2} + p \\
            \left(\sqrt{2 \pi} \, |(p-n) B_1 + \frac{A_2}{B_1}| N \right)^{-1}
            \exp\left(
                - \frac{1}{2} \frac{A^2}{B^2}
            \right)  & n > \frac{A_2}{B_1^2} + p
        \end{cases} 
        \\
        \\
        &=
        \sum_{n \geq 0} \binom{p-\ell}{n} 
        \begin{cases}
            e^{f(n+\ell)}G\left( - \frac{A}{B} - (n+\ell) B \right) 
            +
            e^{f(p-n)}G\left( \frac{A}{B} + (p-n) B \right)
            & n \leq \min \left( -\frac{A_2}{B_1^2} - \ell , \frac{A_2}{B_1^2} + p \right)\\
            e^{f(n+\ell) }G\left( - \frac{A}{B} - (n+\ell) B \right) 
            +
            \frac{\exp\left(
                f(-A/B^2)
            \right) }
            {
                \sqrt{2 \pi} \, |(p-n) B_1 + \frac{A_2}{B_1}| N 
            }
            &  \frac{A_2}{B_1^2} + p <   n \leq      -\frac{A_2}{B_1^2} - \ell             \\
            \frac{\exp\left(
                f(-A/B^2)
            \right) }
            {\sqrt{2 \pi} \, |(n+\ell) B_1 + \frac{A_2}{B_1}| N } 
            +
            e^{f(p-n) }G\left( \frac{A}{B} + (p-n) B \right) 
            &   -\frac{A_2}{B_1^2} - \ell <   n \leq        \frac{A_2}{B_1^2} + p          \\
            \frac{\exp\left(
                f(-A/B^2)
            \right) }
            {\sqrt{2 \pi} \, |(n+\ell) B_1 + \frac{A_2}{B_1}| N } 
            +
            \frac{\exp\left(
                f(-A/B^2)
            \right) }
            {
                \sqrt{2 \pi} \, |(p-n) B_1 + \frac{A_2}{B_1}| N 
            }
            & n > \max \left( -\frac{A_2}{B_1^2} - \ell , \frac{A_2}{B_1^2} + p \right)
        \end{cases}
    \end{split}
\end{equation}
where we defined
\begin{equation}
    \begin{split}
        f(k) = kA + \frac{1}{2}B^2 k^2 = \left( kA_2 + \frac{1}{2}B_1^2 k^2 \right) N^2 + \dots = f_{2}(k) N^2 + f_1(k) N + f_0(k) + \dots
    \end{split}
\end{equation}
whose leading order is $N^2$ times a parabola $f_2(k)$ with zeros in $k=0, -2A_2/B_1^2$ and vertex in $k = -A_2/B_1^2$, and positive concavity (recall that $B_1 > 0$).

We see then that $I_{p, \ell}(A, B)$ can be written as a sum of exponentials, whose exponents are of order $\caO_N(N^2)$.
The leading term of the integral can be obtained by finding the terms whose $N^2$ coefficient in the exponent is the largest.
For this reason we can immediately say that, as  $f(-A/B^2) \leq f(k)$ for all $k \in \bbR$ and given the additional factor $1/N$ in the terms with exponent $f(-A/B^2)$, 
\begin{equation}
    \begin{split}
        I_{p, \ell}(A, B) 
        &\approx
        \sum_{n \geq 0} \binom{p-\ell}{n} 
        \begin{cases}
            e^{f(n+\ell)}G\left( - \frac{A}{B} - (n+\ell) B \right) 
            +
            e^{f(p-n)}G\left( \frac{A}{B} + (p-n) B \right)
            & n \leq \min \left( -\frac{A_2}{B_1^2} - \ell , \frac{A_2}{B_1^2} + p \right)\\
            e^{f(n+\ell) }G\left( - \frac{A}{B} - (n+\ell) B \right) 
            &  \frac{A_2}{B_1^2} + p <   n \leq      -\frac{A_2}{B_1^2} - \ell             \\
            e^{f(p-n) }G\left( \frac{A}{B} + (p-n) B \right) 
            &   -\frac{A_2}{B_1^2} - \ell <   n \leq        \frac{A_2}{B_1^2} + p          \\
            \frac{\exp\left(
                f(-A/B^2)
            \right) }
            {\sqrt{2 \pi} \, |(n+\ell) B_1 + \frac{A_2}{B_1}| N } 
            +
            \frac{\exp\left(
                f(-A/B^2)
            \right) }
            {
                \sqrt{2 \pi} \, |(p-n) B_1 + \frac{A_2}{B_1}| N 
            }
            & n > \max \left( -\frac{A_2}{B_1^2} - \ell , \frac{A_2}{B_1^2} + p \right)
        \end{cases}
    \end{split}
\end{equation}
Now
\begin{itemize}
    \item if 
        \begin{equation}
            \max \left( -\frac{A_2}{B_1^2} - \ell , \frac{A_2}{B_1^2} + p \right) < 0
            \iff p < -\frac{A_2}{B_1^2} < \ell
        \end{equation}
        only the fourth branch contributes (notice that this can happen only if $p < \ell$). Each term in the sum is of the same order in $N$, giving 
        \begin{equation}
            \begin{split}
                I_{p, \ell}(A, B) &\approx 
            \frac{\exp\left(- \frac{A^2}{2 B^2}\right) }{\sqrt{2 \pi} \, N }
            \sum_{n \geq 0} \binom{p-\ell}{n} 
            \left[\frac{1}{-(p-n) B_1 - \frac{A_2}{B_1}} + \frac{1}{(n+\ell) B_1 + \frac{A_2}{B_1}}\right]
            \\
            &=
            \frac{\exp\left(- \frac{A^2}{2 B^2}\right) }{\sqrt{2 \pi} \, N }K(p, \ell, A_2, B_1)
            \end{split}
        \end{equation}
        where
        \begin{equation}
            K(p, \ell, A_2, B_1) = \frac{B_1 \, _2F_1\left(\frac{A_2}{B_1^2}+\ell,\ell-p;\frac{A_2}{B_1^2}+\ell+1;-1\right)}{A_2+B_1^2 \ell}-\frac{B_1 \, _2F_1\left(-\frac{A_2}{B_1^2}-p,\ell-p;-\frac{A_2}{B_1^2}-p+1;-1\right)}{A_2+B_1^2 p}
        \end{equation}
    \item if instead 
         \begin{equation}
            \max \left( -\frac{A_2}{B_1^2} - \ell , \frac{A_2}{B_1^2} + p \right) \geq 0
        \end{equation}
        the fourth branch never contributes. It is easy to see graphically that in this case, the leading term is $n=0$, whichever branch it is contained into. Indeed, one plots the lines 
        \begin{equation}
            n = \frac{A_2}{B_1^2} + p \mathand n = -\frac{A_2}{B_1^2} - \ell
        \end{equation}
        as a function of $A_2 / B_1^2$. Then each of the four sectors identified by the intersection of the two lines identifies one of the branches (the leftmost identifies the third branch, the bottom one the fourth and the right one the second). Then in all the sectors it is easy to see that the exponents $f(p-n)$ and $f(n +\ell)$ are monotonically decreasing, so that they separately achieve their maxima at $n=0$. 
        Finally, we notice that 
        \begin{equation}
            f(p) > f(\ell) \iff -\frac{A_2}{B_1^2} < \frac{\ell + p}{2} 
                    \end{equation} 
        giving
        \begin{equation}
            \begin{split}
                I_{p, \ell}(A, B) \approx 
                \begin{cases}
                    e^{f(p) }G\left( \frac{A}{B} + p B \right) 
                    & -\frac{A_2}{B_1^2} < \frac{\ell + p}{2} \\
                    e^{f(p) }G\left( \frac{A}{B} + p B \right)
                    +
                    e^{f(\ell)}G\left( - \frac{A}{B} - \ell B \right)
                    & -\frac{A_2}{B_1^2} = \frac{\ell + p}{2} \\
                    e^{f(\ell)}G\left( - \frac{A}{B} - \ell B \right)
                    & -\frac{A_2}{B_1^2} > \frac{\ell + p}{2} 
                \end{cases}
            \end{split}
        \end{equation}
\end{itemize}
Thus, to summarise, for $p < \ell$
\begin{equation}
    \begin{split}
        I_{p, \ell}(A, B) \approx 
        \begin{cases}
            \frac{\exp\left(- \frac{A^2}{2 B^2}\right) }{\sqrt{2 \pi} \, N }K(p, \ell, A_2, B_1)
            & p < -\frac{A_2}{B_1^2} < \ell
            \\
            e^{f(p) }G\left( \frac{A}{B} + p B \right) 
            & -\frac{A_2}{B_1^2} \leq p
            \\
            e^{f(\ell)}G\left( - \frac{A}{B} - \ell B \right)
            & -\frac{A_2}{B_1^2} \geq \ell
        \end{cases}
    \end{split}
\end{equation}
and for $p \geq \ell$
\begin{equation}
    \begin{split}
        I_{p, \ell}(A, B) \approx 
        \begin{cases}
            e^{f(p) }G\left( \frac{A}{B} + p B \right) 
            & -\frac{A_2}{B_1^2} < \frac{\ell + p}{2} 
            \\
            e^{f(p) }G\left( \frac{A}{B} + p B \right)
            +
            e^{f(\ell)}G\left( - \frac{A}{B} - \ell B \right)
            & -\frac{A_2}{B_1^2} = \frac{\ell + p}{2} 
            \\
            e^{f(\ell)}G\left( - \frac{A}{B} - \ell B \right)
            & -\frac{A_2}{B_1^2} > \frac{\ell + p}{2} 
        \end{cases}
    \end{split}
\end{equation}
\end{proof}

\begin{proposition}\label{prop.log1pexp}
Consider the integral ($p, \ell \in \bbR$)
    \begin{equation}
        J_{p}(A, B) 
        = \int Dv \, (1 + e^{A_N + B_N v})^{p} \log\left(1 + e^{\ell(A_N + B_N v)} \right) \, ,
    \end{equation}
    where $Dv$ is the standard Gaussian measure. Under the same hypothesis and notations of  Proposition \ref{prop.logistic}, we have that, as $N \to \infty$, the leading order of $J_{p}(A, B)$ satisfies for $0 \leq p < 1$
    \begin{equation}
    \begin{split}
        J_p(A, B) \approx
            \begin{cases}
                                                                \left[A + p B^2 \right] e^{f(p)}G\left( \frac{A}{B} + p B \right)
                &  - \frac{A_2}{B_1^2} \leq p \\
                \frac{B_1 + K_3(p, A_2, B_1)}{\sqrt{2\pi}} N  \exp\left( -\frac{A^2}{2 B^2} \right)
                & p < - \frac{A_2}{B_1^2} < 1 \\
                e^{f(1)}G\left( - \frac{A}{B} - B \right)
                & - \frac{A_2}{B_1^2} \geq 1 \\
            \end{cases}
        \end{split}
    \end{equation}
    and for $p \geq 1$
    \begin{equation}
        \begin{split}
            J_p(A, B) \approx
            \begin{cases}
                                                \left[A + p B^2 \right] e^{f(p)}G\left( \frac{A}{B} + p B \right)
                & - \frac{A_2}{B_1^2} <1  \\
                e^{f(1)}G\left( - \frac{A}{B} - B \right)
                                + \left[A + p B^2 \right] e^{f(p)}G\left( \frac{A}{B} + p B \right)
                & 1 \leq - \frac{A_2}{B_1^2} \leq p \\
                                                                e^{f(1)}G\left( - \frac{A}{B} - B \right)
                & - \frac{A_2}{B_1^2} > p \\
            \end{cases}
        \end{split}
    \end{equation}
    where 
    \begin{equation}
        \begin{split}
            f(k) = kA + \frac{1}{2}B^2 k^2 \, ,
        \end{split}
    \end{equation}
        \begin{equation}
        K_3(p, a, b)
        =
        -a \frac{b \, _2F_1\left(-\frac{a}{b^2}-p,-p;-\frac{a}{b^2}-p+1;-1\right)}{a+b^2 p}
        - pb^2 \frac{b \, _2F_1\left(1-p,-\frac{a}{b^2}-p+1;-\frac{a}{b^2}-p+2;-1\right)}{a+b^2 (p-1)} \, ,
    \end{equation}
    \begin{equation}
        G(z) = \frac{1}{2} \left(1 + \erf \left( \frac{z}{\sqrt{2}} \right)\right) \, ,
    \end{equation}  
    and $G(+\infty) = 1$.
\end{proposition}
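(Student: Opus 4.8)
The plan is to reduce the statement to Proposition~\ref{prop.logistic} through the elementary identity $\partial_p (1+e^{C})^p = (1+e^{C})^p \log(1+e^{C})$, with $C = A_N + B_N v$. After differentiating under the integral sign (justified by analyticity in $p$ and dominated convergence on any compact $p$-interval) this gives the exact relation $J_p(A,B) = \partial_p I_{p,0}(A,B)$, where $I_{p,0}$ is the $\ell=0$ integral of Proposition~\ref{prop.logistic} and the $\ell$ inside the logarithm is taken to be $1$ (the case actually needed for the complexity). I would therefore not re-expand the integrand from scratch but differentiate term by term the exact pre-asymptotic double series for $I_{p,0}$ produced in the proof of Proposition~\ref{prop.logistic}, namely $I_{p,0}=\sum_{n\ge0}\binom{p}{n}\big[e^{f(n)}G(-\tfrac{A}{B}-nB)+e^{f(p-n)}G(\tfrac{A}{B}+(p-n)B)\big]$, and only afterwards insert the $N\to\infty$ asymptotics of the $G$ functions. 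Term-by-term differentiation is legitimate because the two binomial series converge uniformly, so the differentiated series is again exact at finite $N$; the point is that one must \emph{not} naively interchange $\partial_p$ with the large-$N$ limit of the already-asymptotic formula.

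The derivative produces three kinds of contributions per summand: a factor $f'(p-n)=A+(p-n)B^2$ from $\partial_p e^{f(p-n)}$, a Gaussian-density factor $B\,G'$ from $\partial_p G(\tfrac{A}{B}+(p-n)B)$, and a digamma-type factor from $\partial_p\binom{p}{n}$. In the outer regions $-A_2/B_1^2\le p$ (for $0\le p<1$) and $-A_2/B_1^2<1$ (for $p\ge1$), the $n=0$ summand $e^{f(p)}G(\tfrac{A}{B}+pB)$ carries the largest value of $f$ on the exponent lattice, its $G$ tends to $1$, and $\partial_p$ acting on it yields exactly $[A+pB^2]\,e^{f(p)}G(\tfrac{A}{B}+pB)$, the $G'$ term being a subleading Gaussian tail. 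In the complementary region $-A_2/B_1^2\ge1$ (resp. $>p$) the dominant summand is $\binom{p}{1}e^{f(1)}G(-\tfrac{A}{B}-B)$, whose exponent and $G$-argument are $p$-independent; hence $\partial_p$ hits only $\binom{p}{1}=p$ and returns precisely $e^{f(1)}G(-\tfrac{A}{B}-B)$ with no spurious prefactor $p$, as claimed. The ordering of the exponents I would justify exactly as in Proposition~\ref{prop.logistic}, using that $f(k)-\tfrac12(\tfrac{A}{B}+kB)^2\equiv-\tfrac{A^2}{2B^2}$ is the global minimum of $f$.

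The delicate case, and the main obstacle, is the middle region $p<-A_2/B_1^2<1$. There $I_{p,0}\to G(-A/B)\to 1$ is constant in $p$ at leading order, so its derivative vanishes and $J_p$ is governed entirely by the subleading Gaussian-tail terms, each in the regime $G(z)\sim(\sqrt{2\pi}\,|z|N)^{-1}e^{-z^2/2}$. The crucial mechanism is that $\partial_p$ now acts on the $p$-dependent exponent $z(p)^2/2$ of these tails, bringing down a factor $-z\,\partial_p z\sim-(A+(p-n)B^2)=\caO(N^2)$; multiplied by the $1/N$ tail prefactor this upgrades the natural $N^{-1}e^{-A^2/2B^2}$ scaling of $I_{p,0}$ to the $N\,e^{-A^2/2B^2}$ scaling announced for $J_p$. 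Collecting these contributions produces a binomial series $\sum_{n\ge0}\binom{p}{n}[\dots]$ whose two sub-sums I would resum into Gauss hypergeometric functions exactly as $K$ is assembled in Proposition~\ref{prop.logistic}, yielding the combination $B_1+K_3(p,A_2,B_1)$, with the isolated $B_1$ coming from the $n=0$ boundary term. I expect the bookkeeping of this resummation — retaining the correct $\caO(N)$ subleading piece of each exponent and matching it to the ${}_2F_1$ identities — to be the only genuinely laborious step; the $p\ge1$ analysis is identical once the index at which the tail regime begins is shifted, and in the overlap region one simply keeps both the $e^{f(1)}G(-\tfrac{A}{B}-B)$ and the $[A+pB^2]e^{f(p)}G(\tfrac{A}{B}+pB)$ contributions.
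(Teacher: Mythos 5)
Your reduction $J_p=\partial_p I_{p,0}$ (valid for the $\ell=1$ case, which is in fact the only case the paper's own proof treats) is a genuinely different route from the paper's proof: the paper instead splits $\log(1+e^{x})=\max(x,0)+\log(1+e^{-|x|})$, expands both pieces in convergent series on either side of $v_*=-A/B$, and integrates the $\max$ part by parts, which is where its boundary term $\frac{B}{\sqrt{2\pi}}e^{-A^2/(2B^2)}$ and the two sums resummed into $K_3$ come from. On the outer branches your dominant-term analysis is correct and arguably cleaner: differentiating the $n=0$ summand gives $[A+pB^2]e^{f(p)}G(A/B+pB)$ with the $B e^{f(p)}G'$ correction exponentially subdominant, and for $-A_2/B_1^2\geq 1$ the derivative survives only through $\partial_p\binom{p}{1}=1$, giving $e^{f(1)}G(-A/B-B)$; the overlap region for $p\geq 1$ follows as you say.

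The middle branch $p<-A_2/B_1^2<1$ is where your proof fails, and the failure is concrete: the ``crucial mechanism'' you describe does not exist. Each $p$-dependent summand there is a tail term $e^{f(p-n)}G(z(p-n))$ with $z(k)=A/B+kB$, and since $f(k)-\tfrac12 z(k)^2\equiv-\tfrac{A^2}{2B^2}$ (equivalently $f'(k)=Bz(k)$, an identity you invoke yourself), one has $\partial_p\bigl[e^{f(p-n)}G(z(p-n))\bigr]=Be^{f(p-n)}\bigl[z(p-n)\,G(z(p-n))+G'(z(p-n))\bigr]$; the expansion $G(z)=-\frac{G'(z)}{z}\left(1-z^{-2}+\dots\right)$ as $z\to-\infty$ shows the bracket is $\mathcal{O}(z^{-2})\,G'(z)$. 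The $\mathcal{O}(N^2)$ factor you propose to bring down from the Gaussian exponent is cancelled \emph{exactly} by the derivative of $e^{f(p-n)}$, so every differentiated term (including the $\partial_p\binom{p}{n}$ pieces) is $\mathcal{O}(N^{-1})e^{-A^2/(2B^2)}$, and so is their absolutely convergent sum. Your route therefore yields $\mathcal{O}(N^{-1})e^{-A^2/(2B^2)}$ in the middle region and cannot be resummed into the claimed $\frac{B_1+K_3}{\sqrt{2\pi}}Ne^{-A^2/(2B^2)}$; in particular there is no ``$n=0$ boundary term'' supplying the isolated $B_1$ --- the $BG'$ piece that would play that role is precisely what the cancellation removes. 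Be aware this is not a bookkeeping issue you can patch by more careful resummation: it reflects a real tension with the statement itself, whose middle branch the paper obtains by adding three separately $\mathcal{O}(N)$ quantities without checking their leading coefficients against one another (for instance, at $p=0$ the paper's own formula gives $K_3(0,A_2,B_1)=-B_1$, so the claimed prefactor vanishes identically, and a direct estimate of $\int Dv\,\log(1+e^{A+Bv})$ in this regime indeed gives $\mathcal{O}(N^{-1})e^{-A^2/(2B^2)}$, in agreement with your method when carried out correctly). But as a derivation of the proposition as stated, the middle-region step is a genuine gap.
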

\begin{proof}
Define $C(v) = A + Bv$ and notice that it is an increasing function of $v$, with unique zero in $v_* = -A/B = \caO_N(N)$.
Then notice that 
\begin{align}
\logpexp(x) = \max(x,0) +\log\left(1+e^{-|x|}\right) ,
\end{align}
and for $x\neq0$ we have that $e^{|x|}<1$ so that we can write a uniformly convergent series by expanding the logarithm in Taylor series i.e.
\begin{align}
    \logpexp(x) = \max(x,0) +\sum_{n\geq 1}d_ne^{-n|x|}; \hspace{1em}\text{where}\hspace{1em} d_n = \frac{(-1)^{n+1}}{n}.
\end{align}
Thus, we have the following uniformly convergent expansions for the integrand (for some sets of coefficients $e_n$ and $f_n$ such that  $e_1 = 1$ and $f_1 = 1$), whose sum equals exaclty $J_p(A, B)$
\begin{equation}
    \begin{split}
        I_1 =  \int_{-\infty}^{v^*}&Dv\,(1 + e^{C(v)})^{p}\logpexp(C(v)) 
        = \sum_{n\geq1}e_n \int_{-\infty}^{v^*}Dv\, e^{nC(v)} \\
        &=
        \sum_{n\geq1}e_n
        \begin{cases}
            e^{n A + \frac{1}{2}n^2 B^2}G\left( - \frac{A}{B} - n B \right) 
            & n \leq -\frac{A_2}{B_1^2} \\
            \left(\sqrt{2 \pi} \, |n B_1 + \frac{A_2}{B_1}| N \right)^{-1}
            \exp\left(
                - \frac{1}{2} \frac{A^2}{B^2}
            \right)  & n > -\frac{A_2}{B_1^2} 
        \end{cases}
        \\
         I_2 =  \int^{+\infty}_{v^*}&Dv\,(1 + e^{C(v)})^{p}\logpexp(C(v)) 
         = \sum_{n\geq1}f_n \int^{+\infty}_{v^*}Dv\, e^{(p-n)C(v)} \\
         &= 
         \sum_{n\geq1}f_n
         \begin{cases}
            e^{(p-n) A + \frac{1}{2}(p-n)^2 B^2}G\left( \frac{A}{B} + (p-n) B \right) 
            & n \leq \frac{A_2}{B_1^2} + p \\
            \left(\sqrt{2 \pi} \, |(p-n) B_1 + \frac{A_2}{B_1}| N \right)^{-1}
            \exp\left(
                - \frac{1}{2} \frac{A^2}{B^2}
            \right)  & n > \frac{A_2}{B_1^2} + p
        \end{cases} 
        \\
         I_3 =  \int_{v_*}^{+\infty}&Dv\,(1 + e^{C(v)})^{p} C(v) 
         = A \int_{v_*}^{+\infty}Dv\,(1 + e^{C(v)})^{p} 
         + B \int_{v_*}^{+\infty}Dv\,(1 + e^{C(v)})^{p} v \\
         &= A \int_{v_*}^{+\infty}Dv\,(1 + e^{C(v)})^{p} 
         +  p B^2 \int_{v_*}^{+\infty}Dv\,(1 + e^{C(v)})^{p-1} e^{C(v)}
         +  \frac{B}{\sqrt{2\pi}} e^{ -\frac{A^2}{2 B^2} } \\
         &=
         \frac{B}{\sqrt{2\pi}} e^{ -\frac{A^2}{2 B^2} }
         \\&\quad
         + \sum_{n \geq 0} \left[A \binom{p}{n} + pB^2  \binom{p-1}{n} \right]
        \begin{cases}
            e^{(p-n) A + \frac{1}{2}(p-n)^2 B^2}G\left( \frac{A}{B} + (p-n) B \right) 
            & n \leq \frac{A_2}{B_1^2} + p \\
            \left(\sqrt{2 \pi} \, |(p-n) B_1 + \frac{A_2}{B_1}| N \right)^{-1}
            \exp\left(
                - \frac{1}{2} \frac{A^2}{B^2}
            \right)  & n > \frac{A_2}{B_1^2} + p
        \end{cases} 
    \end{split}
\end{equation}
As done in the previous proposition, we need to study the leading term of each of the three integrals $I_{1,2,3}$. We have that
\begin{equation}
    \begin{split}
        I_1 \approx 
        \begin{cases}
            e^{A + \frac{1}{2}B^2}G\left( - \frac{A}{B} - B \right)
            & -\frac{A_2}{B_1^2} \geq 1 \\
            \caO\left(\exp\left( -\frac{A^2}{2 B^2} \right) / \sqrt{\log 1/m}\right) 
            & -\frac{A_2}{B_1^2} < 1
        \end{cases}
    \end{split}
\end{equation}
as whenever the first branch contributes, by increasing monotonicity of the exponent for $1 \leq n \leq -\frac{A_2}{B_1^2}$, the $n=1$ term dominates.
Similarly
\begin{equation}
    \begin{split}
        I_2 \approx 
        \begin{cases}
            e^{(p-1)A + \frac{1}{2}(p-1)^2B^2}G\left( \frac{A}{B} +(p-1) B \right)
            & - \frac{A_2}{B_1^2}  \leq p-1 \\
            \caO\left(\exp\left( -\frac{A^2}{2 B^2} \right) / \sqrt{\log 1/m}\right) 
            & - \frac{A_2}{B_1^2} > p- 1
        \end{cases}
    \end{split}
\end{equation}
and finally
\begin{equation}
    \begin{split}
        I_3 &\approx 
        \frac{B}{\sqrt{2\pi}} e^{ -\frac{A^2}{2 B^2} } + 
        \begin{cases}
            \left[A + p B^2 \right] e^{pA + \frac{1}{2}p^2B^2}G\left( \frac{A}{B} + p B \right)
            & - \frac{A_2}{B_1^2} \leq p  \\
            \frac{K_3(p, A_2, B_1)}{\sqrt{2\pi}} N  \exp\left( -\frac{A^2}{2 B^2} \right) 
            & - \frac{A_2}{B_1^2} > p
        \end{cases}
        \\
        &\approx 
        \begin{cases}
            \left[A + p B^2 \right] e^{pA + \frac{1}{2}p^2B^2}G\left( \frac{A}{B} + p B \right)
            & - \frac{A_2}{B_1^2} \leq p  \\
                        \frac{B_1 + K_3(p, A_2, B_1)}{\sqrt{2\pi}} N  \exp\left( -\frac{A^2}{2 B^2} \right) 
            & - \frac{A_2}{B_1^2} > p
        \end{cases}
    \end{split}
\end{equation}
where
\begin{equation}
    K_3(p, a, b)
    =
    -a \frac{b \, _2F_1\left(-\frac{a}{b^2}-p,-p;-\frac{a}{b^2}-p+1;-1\right)}{a+b^2 p}
    - pb^2 \frac{b \, _2F_1\left(1-p,-\frac{a}{b^2}-p+1;-\frac{a}{b^2}-p+2;-1\right)}{a+b^2 (p-1)}
\end{equation}

Summing it up, this gives for $0 \leq p<1$

\begin{equation}
    \begin{split}
        J_p(A, B) = I_1 + I_2 + I_3 \approx
        \begin{cases}
                                                \left[A + p B^2 \right] e^{f(p)}G\left( \frac{A}{B} + p B \right)
            &  - \frac{A_2}{B_1^2} \leq p \\
            \frac{B_1 + K_3(p, A_2, B_1)}{\sqrt{2\pi}} N  e^{ -\frac{A^2}{2 B^2} }
            & p < - \frac{A_2}{B_1^2} < 1 \\
            e^{f(1)}G\left( - \frac{A}{B} - B \right)
            & - \frac{A_2}{B_1^2} \geq 1 \\
        \end{cases}
    \end{split}
\end{equation}
where we used in the first branch that the exponent $f(p-1) < f(p)$, 
and for $p \geq 1$
\begin{equation}
    \begin{split}
        J_p(A, B) \approx
        \begin{cases}
                                    \left[A + p B^2 \right] e^{f(p)}G\left( \frac{A}{B} + p B \right)
            & - \frac{A_2}{B_1^2} <1  \\
            e^{f(1)}G\left( - \frac{A}{B} - B \right)
                        + \left[A + p B^2 \right] e^{f(p)}G\left( \frac{A}{B} + p B \right)
            & 1 \leq - \frac{A_2}{B_1^2} \leq p \\
                                                e^{f(1)}G\left( - \frac{A}{B} - B \right)
            & - \frac{A_2}{B_1^2} > p \\
        \end{cases}
    \end{split}
\end{equation}
\end{proof}

\section{Details of the Type-I stability computation} \label{label.stabdetails}

In this Section we derive a stability condition for the 1-RSB anstaz.
The 1-RSB ansatz is stable if the solutions of the associated variational problem (i.e. local extremisers of the 1-RSB variational free entropy \eqref{eq.phiRSB1}) are also extremisers of the original replicated saddle-point problem.
A less stringent and less computationally heavy check is to check that the solutions of the 1-RSB variational problem are stable against 2-RSB perturbations. This gives rise to the so-called Type-I and Type-II stability conditions (each type being associated with a different type of 2-RSB perturbation).

In this section, we provide the Type-I stability condition (i.e. under the 2-RSB perturbation \eqref{eq.typeIperturbation}). We conjecture that this coincides with the Type-II, and the more general stability thresholds, as happens in the usual SK model.

We will use the following notations for the 2-RSB (and analgous for the 1-RSB) saddle-point equations
\begin{equation}
    \begin{split}
        \angavg{f(u)}_u &= \int Du \, f(u) \, ,\\
        \angavg{f(u, v)}_v &= \frac{1}{N_v(u)}\int Dv \, f(u, v) N_z(u, v)^{p_1/p_2} \, ,\\
        \angavg{f(u, v, z)}_z &= \frac{1}{N_z(u, v)}\int Dz \, f(u, v, z) \left(1 + e^{\beta H(u, v, z)}\right)^{p_2} \, ,\\
        N_v(u) &= \int Dv \, N_z(u, v)^{p_1/p_2} \, ,\\
        N_z(u, v) &= \int Dz \, \left(1 + e^{\beta H(u, v, z)}\right)^{p_2} \, ,
    \end{split}
\end{equation}
so that the equation for $q_2$ reads:
\begin{equation}
    \begin{split}
        q_2 = \angavg{\angavg{\angavg{ \ell^2 }_z}_v}_u  \, ,
    \end{split}
\end{equation}where $\ell = \logistic(\beta H_{\rm 2RSB})$ and
\begin{equation}
     H_{\rm 2RSB}\equiv H(u, v, z)  =  h + \frac{\beta}{2} (m - q_2) 
    + \sqrt{q_0} u + \sqrt{q_1 - q_0} v + \sqrt{q_2 - q_1} z
\end{equation}

Let 
\begin{equation}\label{eq.typeIperturbation}
    \begin{split}
        m^{\rm 2RSB} &= m \, , \\
        q^{\rm 2RSB}_2 &= q_1 + \epsilon \, , \\
        q^{\rm 2RSB}_1 &= q_1 \, , \\
        q^{\rm 2RSB}_0 &= q_0\, , \\
        p^{\rm 2RSB}_2 &= x \, , \\
        p^{\rm 2RSB}_1 &= p_1 \, ,\\
    \end{split}
\end{equation}
with $\epsilon > 0$ small, $x \in (p_1, 1)$ and $m, q_1, q_0, p_1$ are the solutions to the 1-RSB SP equations. 

We start by expanding
\begin{equation}
    \begin{split}
        \beta H_{\rm 2RSB} 
        &= 
        \beta h + \frac{\beta^2}{2} (m^{\rm 2RSB} - q^{\rm 2RSB}_2) 
        + \beta \sqrt{q^{\rm 2RSB}_0} u 
        + \beta \sqrt{q^{\rm 2RSB}_1 - q^{\rm 2RSB}_0} v 
        + \beta \sqrt{q^{\rm 2RSB}_2 - q^{\rm 2RSB}_1} z
        \\&= 
        \beta h + \frac{\beta^2}{2} (m - q_1 - \epsilon)
        + \beta \sqrt{q_0} u 
        + \beta \sqrt{q_1 - q_0} v 
        + \beta \sqrt{\epsilon} z
        \\&= \beta H_{\rm 1-RSB}(u, v) 
        + \beta \sqrt{\epsilon} z
        - \frac{\beta^2}{2} \epsilon 
    \end{split}
\end{equation}
so that ($\ell(u,v) = \ell(\beta H_{\rm 1-RSB}(u,v))$ is the logistic function)
\begin{equation}
    \begin{split}
        \left(1+e^{\beta H_{\rm 2RSB}}\right)^{p^{2RSB}_2} 
        &=
        \left(1+e^{\beta H_{\rm 1-RSB}(u, v) + \beta \sqrt{\epsilon} z
        - \frac{\beta^2}{2} \epsilon }\right)^{x}
        \\&=
        \left(1+e^{\beta H_{\rm 1-RSB}(u, v)}\right)^{x}
        \left(1 + \ell(u,v)\left( e^{\beta \sqrt{\epsilon} z
        - \frac{\beta^2}{2} \epsilon} - 1\right)  \right)^{x}
        \\&=
        \left(1+e^{\beta H_{\rm 1-RSB}(u, v)}\right)^{x}
        \left(1 + \ell(u,v)\left( \beta \sqrt{\epsilon} z
        + (z^2-1) \frac{\beta^2}{2} \epsilon + \caO(\epsilon^{3/2})\right)  \right)^{x}
        \\&=
        \left(1+e^{\beta H_{\rm 1-RSB}(u, v)}\right)^{x}
        \left(1 
        + x \ell(u,v)\left( \beta \sqrt{\epsilon} z
        + (z^2-1) \frac{\beta^2}{2} \epsilon \right)  
        + \frac{x(x-1)}{2} \beta^2 \epsilon z^2 \ell(u,v)^2
        + \caO(\epsilon^{3/2})
        \right)
        \\&= 
        \left(1+e^{\beta H_{\rm 1-RSB}(u, v)}\right)^{x}
        \left(1 
        + \sqrt{\epsilon} z \beta x \ell(u,v)
        + \epsilon \frac{\beta^2}{2} x \ell(u,v) \left( z^2 - 1 + (x - 1) z^2 \ell(u,v) \right)
        + \caO(\epsilon^{3/2})
        \right)
    \end{split}
\end{equation}
implying 
\begin{equation}
    \begin{split}
        N^{\rm 2RSB}_z(u, v) &=
        \int Dz \, \left(1+e^{\beta H_{\rm 1-RSB}(u, v)}\right)^{x}
        \left(1 
        + \sqrt{\epsilon} z \beta x \ell(u,v)
        + \epsilon \frac{\beta^2}{2} x \ell(u,v) \left( z^2 - 1 + (x - 1) z^2 \ell(u,v) \right)
        + \caO(\epsilon^{3/2})
        \right)
        \\&=
        \left(1+e^{\beta H_{\rm 1-RSB}(u, v)}\right)^{x}
        \left(1 
        + \epsilon \frac{\beta^2}{2} x (x - 1) \ell(u,v)^2 
        + \caO(\epsilon^{2})
        \right) \, .
    \end{split}
\end{equation}

Moreover 
\begin{equation}
    \begin{split}
        \ell&\left(
            \beta H_{\rm 1-RSB}(u, v) + \beta \sqrt{\epsilon} z
            - \frac{\beta^2}{2} \epsilon 
        \right)
        \\&=
        \ell(u, v) \left( 1
        + \sqrt{\epsilon} z \beta ( 1- \ell(u, v))
        - \epsilon \frac{\beta^2}{2} (1-\ell(u,v)) \left[1-z^2 + 2 z^2 \ell(u,v) \right]
        + \caO(\epsilon^{3/2}) \right)
    \end{split}
\end{equation}
and 
\begin{equation}
    \begin{split}
        \ell&\left(
            \beta H_{\rm 1-RSB}(u, v) + \beta \sqrt{\epsilon} z
            - \frac{\beta^2}{2} \epsilon 
        \right)^2
        \\&=
        \ell(u, v)^2 \left( 1
        + \sqrt{\epsilon} z \beta ( 1- \ell(u, v))
        - \epsilon \frac{\beta^2}{2} (1-\ell(u,v)) \left[1-z^2 + 2 z^2 \ell(u,v) \right]
        + \caO(\epsilon^{3/2}) \right)^2
        \\&=
        \ell(u, v)^2 \left( 1
        + 2 \sqrt{\epsilon} z \beta ( 1- \ell(u, v))
        + \epsilon z^2 \beta^2 ( 1- \ell(u, v))^2
        - \epsilon \beta^2 (1-\ell(u,v)) \left[1-z^2 + 2 z^2 \ell(u,v) \right]
        + \caO(\epsilon^{3/2}) \right)
        \\&=
        \ell(u, v)^2 \left( 1
        + 2 \sqrt{\epsilon} z \beta ( 1- \ell(u, v))
        - \epsilon \beta^2 (1-\ell(u,v)) \left[1 - 2 z^2 + 3 z^2 \ell(u,v) \right]
        + \caO(\epsilon^{3/2}) \right)
    \end{split}
\end{equation}

Thus, we have that 
\begin{equation}
    \begin{split}
        &\angavg{\ell(u,v,z)^2}^{\rm 2RSB}_z = 
        \left(1+e^{\beta H_{\rm 1-RSB}(u, v)}\right)^{- x}
        \left(1 
        - \epsilon \frac{\beta^2}{2} x (x - 1) \ell(u,v)^2 
        + \caO(\epsilon^{2})
        \right) \times 
        \\&\quad\times 
        \int Dz \, 
        \left(1+e^{\beta H_{\rm 1-RSB}(u, v)}\right)^{x}
        \left(1 
        + \sqrt{\epsilon} z \beta x \ell(u,v)
        + \epsilon \frac{\beta^2}{2} x \ell(u,v) \left( z^2 - 1 + (x - 1) z^2 \ell(u,v) \right)
        + \caO(\epsilon^{3/2})
        \right)\times 
        \\&\quad\times 
        \ell(u, v)^2 \left( 1
        + 2 \sqrt{\epsilon} z \beta ( 1- \ell(u, v))
        - \epsilon \beta^2 (1-\ell(u,v)) \left[1 - 2 z^2 + 3 z^2 \ell(u,v) \right]
        + \caO(\epsilon^{3/2}) \right)
        \\&= 
        \ell(u, v)^2
        \left(1 
        - \epsilon \frac{\beta^2}{2} x (x - 1) \ell(u,v)^2 
        + \caO(\epsilon^{2})
        \right) 
        \int Dz \, 
        \left(1 
        + \sqrt{\epsilon} z \beta x \ell(u,v)
        + 2 \sqrt{\epsilon} z \beta ( 1- \ell(u, v))
        \right.\\&\quad\left.
        + 2 \epsilon z^2 \beta^2 x \ell(u,v) ( 1- \ell(u, v))
        + \epsilon \frac{\beta^2}{2} x \ell(u,v) \left( z^2 - 1 + (x - 1) z^2 \ell(u,v) \right)
        - \epsilon \beta^2 (1-\ell(u,v)) \left[1 - 2 z^2 + 3 z^2 \ell(u,v) \right]
        \right.\\&\quad\left.
        + \caO(\epsilon^{3/2})
        \right) 
        \\&= 
        \ell(u, v)^2
        \left(1 
        - \epsilon \frac{\beta^2}{2} x (x - 1) \ell(u,v)^2 
        + \caO(\epsilon^{2})
        \right) 
        \times 
        \\&\quad\times 
        \left(1 
        + 2 \epsilon \beta^2 x \ell(u,v) ( 1- \ell(u, v))
        + \epsilon \frac{\beta^2}{2} x (x - 1) \ell(u,v)^2
        - \epsilon \beta^2 (1-\ell(u,v)) \left( 3 \ell(u,v) -1 \right)
        + \caO(\epsilon^{2})
        \right) 
        \\&= 
        \ell(u, v)^2
        \left(1 
                + 2 \epsilon \beta^2 x \ell(u,v) ( 1- \ell(u, v))
                - \epsilon \beta^2 (1-\ell(u,v)) \left( 3 \ell(u,v) -1 \right)
        + \caO(\epsilon^{2})
        \right) 
    \end{split}
\end{equation}

Now 
\begin{equation}
    \begin{split}
        N^{\rm 2RSB}_z&(u, v)^{p^{\rm 2RSB}_1 / p^{\rm 2RSB}_2} =
        N^{\rm 2RSB}_z(u, v)^{p_1 / x} \\
        \\&=
        \left(1+e^{\beta H_{\rm 1-RSB}(u, v)}\right)^{p_1}
        \left(1 
        + \epsilon \frac{\beta^2}{2} x (x - 1) \ell(u,v)^2 
        + \caO(\epsilon^{2})
        \right)^{p_1 / x}
        \\&=
        \left(1+e^{\beta H_{\rm 1-RSB}(u, v)}\right)^{p_1}
        \left(1 
        + \epsilon \frac{\beta^2}{2} p_1 (x - 1) \ell(u,v)^2 
        + \caO(\epsilon^{2})
        \right)
         \, .
    \end{split}
\end{equation}
so that 
\begin{equation}
    \begin{split}
        N^{\rm 2RSB}_v(u) &= \int Dv \, N^{\rm 2RSB}_z(u, v)^{p^{\rm 2RSB}_1 / p^{\rm 2RSB}_2} 
        \\&= 
        \int Dv \, 
        \left(1+e^{\beta H_{\rm 1-RSB}(u, v)}\right)^{p_1}
        \left(1 
        + \epsilon \frac{\beta^2}{2} p_1 (x - 1) \ell(u,v)^2 
        + \caO(\epsilon^{2})
        \right)
        \\&= N_v(u) \left(1 + \epsilon \frac{\beta^2}{2} p_1 (x - 1)  \angavg{\ell(u,v)^2 }_v + \caO(\epsilon^{2})\right) 
    \end{split}
\end{equation}
and 
\begin{equation}
    \begin{split}
        &\angavg{\angavg{\ell(u,v,z)^2}^{\rm 2RSB}_z}^{\rm 2RSB}_v = 
        N_v(u)^{-1} \left(1 - \epsilon \frac{\beta^2}{2} p_1 (x - 1)  \angavg{\ell(u,v)^2 }_v + \caO(\epsilon^{2})\right) 
        \times \\&\quad\times \int Dv \, 
        \left(1+e^{\beta H_{\rm 1-RSB}(u, v)}\right)^{p_1}
        \left(1 
        + \epsilon \frac{\beta^2}{2} p_1 (x - 1) \ell(u,v)^2 
        + \caO(\epsilon^{2})
        \right)
        \times\\&\quad\times
        \ell(u, v)^2
        \left(1 
        + 2 \epsilon \beta^2 x \ell(u,v) ( 1- \ell(u, v))
        - \epsilon \beta^2 (1-\ell(u,v)) \left( 3 \ell(u,v) -1 \right)
        + \caO(\epsilon^{2})
        \right) 
        \\&= 
        \left(1 - \epsilon \frac{\beta^2}{2} p_1 (x - 1)  \angavg{\ell(u,v)^2 }_v + \caO(\epsilon^{2})\right) 
        \times \\&\quad\times \angavg{
        \ell(u, v)^2
        \left(1 
        + \epsilon \frac{\beta^2}{2} p_1 (x - 1) \ell(u,v)^2 
        + 2 \epsilon \beta^2 x \ell(u,v) ( 1- \ell(u, v))
        - \epsilon \beta^2 (1-\ell(u,v)) \left( 3 \ell(u,v) -1 \right)
        + \caO(\epsilon^{2})
        \right) }_v
        \\&= 
        \left(1 - \epsilon \frac{\beta^2}{2} p_1 (x - 1)  \angavg{\ell(u,v)^2 }_v + \caO(\epsilon^{2})\right) 
        \times \\&\quad\times \angavg{
        \ell(u, v)^2
        + \epsilon \frac{\beta^2}{2} p_1 (x - 1) \ell(u,v)^4
        + 2 \epsilon \beta^2 x \ell(u,v)^3 ( 1- \ell(u, v))
        - \epsilon \beta^2 (1-\ell(u,v)) \left( 3 \ell(u,v) -1 \right) \ell(u,v)^2
        + \caO(\epsilon^{2})
         }_v
         \\&= 
         \left(1 - \epsilon \frac{\beta^2}{2} p_1 (x - 1)  \angavg{\ell(u,v)^2 }_v + \caO(\epsilon^{2})\right) 
         \times \\&\quad\times 
         \left(
         \angavg{\ell(u, v)^2   }_v
         + \epsilon \frac{\beta^2}{2} p_1 (x - 1) \angavg{\ell(u,v)^4   }_v
         + 2 \epsilon \beta^2 x \angavg{\ell(u,v)^3    }_v
         - 2 \epsilon \beta^2 x \angavg{\ell(u,v)^4    }_v
         - 3 \epsilon \beta^2 \angavg{ \ell(u,v)^3   }_v
         \right.\\&\qquad\quad\left.
         + \epsilon \beta^2 \angavg{\ell(u,v)^2   }_v
         + 3 \epsilon \beta^2 \angavg{\ell(u,v)^4    }_v
         - \epsilon \beta^2 \angavg{\ell(u,v)^3   }_v
         + \caO(\epsilon^{2})
         \right)
         \\&= 
         \left(1 - \epsilon \frac{\beta^2}{2} p_1 (x - 1)  \angavg{\ell(u,v)^2 }_v + \caO(\epsilon^{2})\right) 
         \times \\&\quad\times 
         \left(
         \angavg{\ell(u, v)^2   }_v
         + \epsilon \frac{\beta^2}{2} p_1 (x - 1) \angavg{\ell(u,v)^4   }_v
         + 2 \epsilon \beta^2 x \angavg{\ell(u,v)^3    }_v
         - 2 \epsilon \beta^2 x \angavg{\ell(u,v)^4    }_v
         - 3 \epsilon \beta^2 \angavg{ \ell(u,v)^3   }_v
         \right.\\&\qquad\quad\left.
         + \epsilon \beta^2 \angavg{\ell(u,v)^2   }_v
         + 3 \epsilon \beta^2 \angavg{\ell(u,v)^4    }_v
         - \epsilon \beta^2 \angavg{\ell(u,v)^3   }_v
         + \caO(\epsilon^{2})
         \right)
         \\&= 
            \angavg{\ell(u,v)^2}_v
            + \epsilon \frac{\beta^2}{2} \left[
            (6 + p_1 x - p_1 - 4 x)  \angavg{\ell(u,v)^4}_v
            + 4 (x - 2) \angavg{\ell(u,v)^3}_v
            + 2 \angavg{\ell(u,v)^2}_v
            + p_1 (1-x) \angavg{\ell(u,v)^2}_v^2
            \right]
         + \caO(\epsilon^{2})
    \end{split}
\end{equation}
so that the SP equation for $q_2$ reads
\begin{equation}
    \begin{split}
        q_2 + \epsilon &= \angavg{\angavg{\angavg{\ell(u,v,z)}^{\rm 2RSB}_z}^{\rm 2RSB}_v}^{\rm 2RSB}_u
        \\&= 
        \angavg{\angavg{\ell(u,v)^2}_v}_u
        + \epsilon \frac{\beta^2}{2} \left[
        (6 + p_1 x - p_1 - 4 x)  \angavg{\angavg{\ell(u,v)^4}_v}_u
        + 4 (x - 2) \angavg{\angavg{\ell(u,v)^3}_v}_u
        \right.\\&\qquad\left.
        + 2 \angavg{\angavg{\ell(u,v)^2}_v}_u 
        + p_1 (1-x) \angavg{\angavg{\ell(u,v)^2}_v^2}_u
        \right]
     + \caO(\epsilon^{2})
    \end{split}
\end{equation}

The threshold for the linear stability is that the perturbation decreases as the fixed point iterations go on, which happens if 
\begin{equation}
    \begin{split}
        \frac{\beta^2}{2} \left[
        (6 + p_1 x - p_1 - 4 x)  \angavg{\angavg{\ell(u,v)^4}_v}_u
        + 4 (x - 2) \angavg{\angavg{\ell(u,v)^3}_v}_u
        + 2 \angavg{\angavg{\ell(u,v)^2}_v}_u
        + p_1 (1-x) \angavg{\angavg{\ell(u,v)^2}_v^2}_u
        \right] < 1
    \end{split}
\end{equation}
Notice that this condition depends on $x$.
Isolating $x$ we obtain 
\begin{equation}
    \begin{split}
        (6 - p_1)  \angavg{\angavg{\ell(u,v)^4}_v}_u
        - 8 \angavg{\angavg{\ell(u,v)^3}_v}_u
        + 2 \angavg{\angavg{\ell(u,v)^2}_v}_u
        + p_1 \angavg{\angavg{\ell(u,v)^2}_v^2}_u
        \\
        + x \left[ 
            p_1 \left( \angavg{\angavg{\ell(u,v)^4}_v}_u - \angavg{\angavg{\ell(u,v)^2}_v^2}_u \right)
            + 4 \left( \angavg{\angavg{\ell(u,v)^3}_v}_u - \angavg{\angavg{\ell(u,v)^4}_v}_u \right)
        \right]
         < \frac{2}{\beta^2} \, .
    \end{split}
\end{equation}
The most stringent stability condition is given by $x=1$, giving
\begin{equation}
    \begin{split}
        \beta^2 \left[
        \angavg{\angavg{\ell(u,v)^4}_v}_u
        - 2 \angavg{\angavg{\ell(u,v)^3}_v}_u
        + \angavg{\angavg{\ell(u,v)^2}_v}_u
        \right] < 1 \, .
    \end{split}
\end{equation}

\section{The non-symmetric models}\label{sec.SM1-RSBnonSymmetric}

    In this section we discus in more detail the relationship between symmetric and non-symmetric versions of the MAS problem. 
    We start by defining three versions of the MAS problem, based on the properties of the matrix $J$ and of the set of submatrices considered as microstates (the notation follows the definitions of the Main Text):
    \begin{itemize}
    \item \textbf{Rectangular MAS:} $\Nr$, $\Nc$, $\kr$ and $\kc$ are unconstrained. This is the problem relevant for applications \cite{LAS, biclustering_survey}. We define $\aout = \Nr / \Nc$ and $\ain= \kr / \kc$.
    \item \textbf{Square MAS of a square random matrix:} $\Nr = \Nc = N$ and $\kr = \kc = k$, the matrix $J$ is random. This version is studied in the mathematical literature \cite{Bhamidi2017, Gamarnik2018, Cheairi}.
    \item \textbf{Principal MAS of a symmetric random matrix:}   $\Nr = \Nc = N$, $\kr = \kc = k$, $J = J^T$ is symmetric and we consider only \textit{principal submatrices}, i.e. submatrices for which $I_r = I_c$.
\end{itemize}
    In this Section, we provide a mapping of the Rectangular MAS problem onto a bipartite SK model, and compute the associated free entropy using replica theory. 
    We show that in the square case $\alpha_{\rm out} = \alpha_{\rm in} = 1$, the variational free entropy admits a symmetric saddle-point.
    At the symmetric saddle-point, the state equations and the thermodynamic observables of the "Square MAS of a square random matrix" coincide with those of the "Principal MAS of a symmetric random matrix", suggesting that the thermodynamical properties of the two models coincide in the thermodynamic limit.
    This justifies the comparison between the phase diagram we obtain for the "Principal MAS of a symmetric random matrix" and the results in the literature for the "Square MAS of a square random matrix" \cite{Bhamidi2017, Gamarnik2018, Cheairi}.
    We remark that the symmetric saddle-point does not exist for $\alpha_{\rm out}, \alpha_{\rm in} \neq 1$, suggesting that the Rectangular MAS may have a richer phase diagram than the Square/Principal MAS.
    We also observe that bipartite SK models are much harder to study rigorously, and as far as we know no proof supporting replica conjectures is available in this case.

\subsection{The rectangular model}

We can study the non symmetric model by considering the energy function
\begin{equation}
    E(\sigma, \tau) 
    = \frac{1}{\Nr^{1/4}\Nc^{1/4}} \sum_{i=1}^{\Nr}\sum_{j=1}^{\Nc} J_{ij} \sigma_i \tau_j
    = \frac{\aout^{1/4}}{\sqrt{\Nr}} \sum_{i=1}^{\Nr}\sum_{j=1}^{\Nc} J_{ij} \sigma_i \tau_j
\end{equation}
where $\sigma \in \{0, 1\}^\Nr$ and $\tau \in \{0, 1\}^\Nc$.
As in the symmetric case, the Boolean vectors $\sigma$ and $\tau$ encode respectively the row-set and column-set of a submatrix of $J$.
We define the Gibbs measure
\begin{equation}
    p(\sigma, \tau) = \frac{1}{Z_J} \exp\left( \beta E(\sigma, \tau) + \beta \hr \sum_{i=1}^{\Nr} \sigma_i + \beta \hc \sum_{j=1}^{\Nc} \tau_i\right)
\end{equation}
where both magnetic fields are chosen such that 
$\EE \angavg{\sum_{i=1}^{\Nr} \sigma_i} = \kr$ 
and
$\EE \angavg{\sum_{j=1}^{\Nc} \tau_j} = \kc$, 
angular brackets denote averaging over the Gibbs measure at fixed $J$ and $\EE$ denotes averaging over $J$.

We define the aspect ratios $\aout = \Nr / \Nc$ and $\ain = \kr / \kc$ (a priori different).
Without loss of generality, we can consider the case $\aout \geq 1$.
The inner aspect ratio $\ain$ must satisfy the following bounds
\begin{equation}
    \frac{1}{\Nc} \leq \ain \leq \Nr \implies \ain > 0
\end{equation}
where the second inequality holds in the thermodynamic limit $\Nr, \Nc \to \infty$.

We will use the magnetisations defined as
\begin{equation}
    \mr = \frac{\kr}{\Nr}
    \mathand 
    \mc = \frac{\kc}{\Nc} = \frac{\ain \kr}{\aout \Nr} = \frac{\ain}{\aout} \mr
\end{equation}
giving
\begin{equation}
    m_c \in [0, 1] \implies \frac{\ain}{\aout} \mr \leq 1 \implies \mr \leq \frac{\aout}{\ain} \, .
\end{equation}
Thus, the free parameters of the problem will be 
$\beta \in (0,+\infty), \aout \in [1, +\infty), \ain \in (0, +\infty)$ and $\mr \in (0, \aout / \ain)$.

\subsection{Observables}

Define the averaged free entropy as 
\begin{equation}
    \Phi = \lim_{\Nr, \Nc \to \infty} \frac{1}{\sqrt{\Nr \Nc}} \EE_J \log Z_J \, .
\end{equation}
Then, we have the usual grand-canonical decomposition
\begin{equation}
    \Phi = \beta e + s + \beta \hr \mr + \beta \hc \mc
\end{equation}
where $e = E / \sqrt{\Nr \Nc}$ is the average energy density and $s$ the average entropy density.
The energy density can be computed as 
\begin{equation}
    e = \del_\beta \left( \Phi - \beta \hr \mr - \beta \hc \mc \right) \, .
\end{equation}
Finally, the submatrix average is given by
\begin{equation}
    A(\sigma, \tau)
    = \frac{1}{\mr \mc \Nr \Nc} \sum_{i=1}^{\Nr}\sum_{j=1}^{\Nc} J_{ij} \sigma_i \tau_j
    = \frac{1}{\mr \mc \Nr^{3/4}\Nc^{3/4}} E(\sigma, \tau)
    \end{equation}
giving the following relation between the average intensive submatrix average and the average energy density
\begin{equation}
    a = A \Nr^{1/4}\Nc^{1/4} = \frac{1}{\mr \mc} e \, .
\end{equation}

\subsection{Computation of the free entropy using replica theory}

The replicated partition function reads (remember that $\sigma^2 = \sigma$ as $\sigma = 0,1$ and similar for $\tau$)
\begin{equation}
    \begin{split}
        \EE_J Z^n 
        &= 
        \Tr_{\sigma, \tau} 
        \exp\left[ \beta \hr \sum_a \sum_{i=1}^{\Nr} \sigma_i^a + \beta \hc \sum_a \sum_{j=1}^{\Nc} \tau_i^a \right]
        \prod_{i, j = 1}^N 
        \int DJ_{ij} \, 
        \exp
        \left[ \frac{\beta}{\Nr^{1/4}\Nc^{1/4}} \sum_{i=1}^{\Nr}\sum_{j=1}^{\Nc} J_{ij} \sigma_i \tau_j \right] 
        \\&=
        \Tr_{\sigma, \tau} \exp\left[ 
            \beta \hr \sum_a \sum_{i=1}^{\Nr} \sigma_i^a + \beta \hc \sum_a \sum_{j=1}^{\Nc} \tau_i^a
            + \frac{\beta^2 }{2 \sqrt{\Nr \Nc}} \sum_{a, b} 
            \left( \sum_{i=1}^{\Nr} \sigma^a_i \sigma^b_i \right)
            \left( \sum_{j=1}^{\Nc} \tau^a_j \tau^b_j \right)
        \right]
        \\&=
        \Tr_{\sigma, \tau} \exp\left[ 
            \beta \hr \sum_a \sum_{i=1}^{\Nr} \sigma_i^a + \beta \hc \sum_a \sum_{j=1}^{\Nc} \tau_i^a
            + \frac{\beta^2 }{2 \sqrt{\Nr \Nc}} \sum_{a} 
            \left( \sum_{i=1}^{\Nr} \sigma^a_i  \right)
            \left( \sum_{j=1}^{\Nc} \tau^a_j  \right)
        \right.\\&\qquad\qquad\qquad\left.
            + \frac{\beta^2 }{\sqrt{\Nr \Nc}} \sum_{a < b} 
            \left( \sum_{i=1}^{\Nr} \sigma^a_i \sigma^b_i \right)
            \left( \sum_{j=1}^{\Nc} \tau^a_j \tau^b_j \right)
        \right]
    \end{split}
\end{equation}
Now we enforce the order parameters using delta functions and their exponential representation
\begin{equation}
    \begin{split}
        \EE_J Z^n 
        &=
        \int \prod_a d\mr^a \, d\mc^a \, \prod_{a<b} d\qr^{ab} \, d\qc^{ab} \, \times
        \\&\qquad 
        \exp\left[ 
            \Nr \beta \hr \sum_{a} \mr^a 
            + \Nc \beta \hc \sum_{a} \mc^a
            + \sqrt{\Nr \Nc} \frac{\beta^2}{2} \sum_{a} \mr^a \mc^a
            + \sqrt{\Nr \Nc} \beta^2 \sum_{a < b} \qr^{ab} \qc^{ab}
        \right] \times
        \\&\qquad 
        \int \prod_a d\hmr^a \, d\hmc^a \, \prod_{a<b} d\hqr^{ab} \, d\hqc^{ab} \, \times
        \\&\qquad 
        \exp\left[ 
            - \Nr \sum_a \mr^a \hmr^a
            - \Nc \sum_a \mc^a \hmc^a 
            - \Nr \sum_{a<b} \qr^{ab} \hqr^{ab} 
            - \Nc \sum_{a<b} \qc^{ab} \hqc^{ab}
        \right] \times
        \\&\qquad 
        \Tr_{\sigma, \tau} \exp\left[ 
            \sum_{a} \hmr^a \sum_{i=1}^{\Nr} \sigma^a_i 
            + \sum_{a} \hmc^a \sum_{j=1}^{\Nc} \tau^a_j
            + \sum_{a < b} \hqr^{ab} \left( \sum_{i=1}^{\Nr} \sigma^a_i \sigma^b_i \right)
            + \sum_{a < b} \hqc^{ab} \left( \sum_{j=1}^{\Nc} \tau^a_j \tau^b_j \right)
        \right]
        \\&= 
        \int \prod_a d\mr^a \, d\mc^a \, \prod_{a<b} d\qr^{ab} \, d\qc^{ab} \, \times
        \\&\qquad 
        \exp\left[ 
            \Nr \beta \hr \sum_{a} \mr^a 
            + \Nc \beta \hc \sum_{a} \mc^a
            + \sqrt{\Nr \Nc} \frac{\beta^2}{2} \sum_{a} \mr^a \mc^a
            + \sqrt{\Nr \Nc} \beta^2 \sum_{a < b} \qr^{ab} \qc^{ab}
        \right] \times
        \\&\qquad 
        \int \prod_a d\hmr^a \, d\hmc^a \, \prod_{a<b} d\hqr^{ab} \, d\hqc^{ab} \, \times
        \\&\qquad 
        \exp\left[ 
            - \Nr \sum_a \mr^a \hmr^a
            - \Nc \sum_a \mc^a \hmc^a 
            - \Nr \sum_{a<b} \qr^{ab} \hqr^{ab} 
            - \Nc \sum_{a<b} \qc^{ab} \hqc^{ab}
        \right] \times
        \\&\qquad 
        \exp\left[ 
            \Nr \log \Tr_{\sigma} \exp \left( \sum_{a} \hmr^a \sigma^a + \sum_{a < b} \hqr^{ab} \sigma^a \sigma^b \right)
            +
            \Nc \log \Tr_{\tau} \exp \left( \sum_{a} \hmc^a \tau^a + \sum_{a < b} \hqc^{ab} \tau^a \tau^b \right)
        \right] 
    \end{split}
\end{equation}
Thus, we obtain the following variational free entropy (recall $\aout = \Nr / \Nc$)
\begin{equation}
    \begin{split}
        \Phi &= 
        \sqrt{\aout} \beta \hr \sum_{a} \mr^a 
        + \sqrt{\frac{1}{\aout}} \beta \hc \sum_{a} \mc^a
        + \frac{\beta^2}{2} \sum_{a} \mr^a \mc^a
        + \beta^2 \sum_{a < b} \qr^{ab} \qc^{ab}
        \\&\quad
        - \sqrt{\aout} \sum_a \mr^a \hmr^a
        - \sqrt{\frac{1}{\aout}} \sum_a \mc^a \hmc^a 
        - \sqrt{\aout} \sum_{a<b} \qr^{ab} \hqr^{ab} 
        - \sqrt{\frac{1}{\aout}} \sum_{a<b} \qc^{ab} \hqc^{ab}
        \\&\quad
        + \sqrt{\aout} 
        \log \Tr_{\sigma} \exp \left( \sum_{a} \hmr^a \sigma^a + \sum_{a < b} \hqr^{ab} \sigma^a \sigma^b \right)
        + \sqrt{\frac{1}{\aout}} 
        \log \Tr_{\tau} \exp \left( \sum_{a} \hmc^a \tau^a + \sum_{a < b} \hqc^{ab} \tau^a \tau^b \right)
    \end{split}
\end{equation}
to be extremised.
The extremisation condition for the non-hat variable leads to the state equations
\begin{equation}
    \begin{split}
        (\mr^a) \qquad 0 &= 
        \sqrt{\aout} \beta \hr 
        + \frac{\beta^2}{2} \mc^a
        - \sqrt{\aout} \hmr^a
        \implies 
        \mc^a
        =
        \sqrt{\aout} \frac{2}{\beta^2} \left(\hmr^a - \beta \hr\right)
        \\
        (\mc^a) \qquad 0 &=
        \sqrt{\frac{1}{\aout}} \beta \hc 
        + \frac{\beta^2}{2}  \mr^a 
        - \sqrt{\frac{1}{\aout}} \hmc^a 
        \implies
        \mr^a
        =
        \sqrt{\frac{1}{\aout}} \frac{2}{\beta^2} \left(\hmc^a - \beta \hc\right)
        \\
        (\qr^{ab}) \qquad 0 &=
        \beta^2 \qc^{ab}
        - \sqrt{\aout} \hqr^{ab} 
        \implies 
        \qc^{ab}
        = \frac{1}{\beta^2}  \sqrt{\aout} \hqr^{ab} 
        \\
        (\qc^{ab}) \qquad 0 &=
        \beta^2  \qr^{ab} 
        - \sqrt{\frac{1}{\aout}} \hqc^{ab}
        \implies 
        \qr^{ab} 
        = \frac{1}{\beta^2} \sqrt{\frac{1}{\aout}} \hqc^{ab}
    \end{split}
\end{equation}
while that for the hat variables gives 
\begin{equation}
    \begin{split}
        (\hmr^a) \qquad 0 &= 
        - \sqrt{\aout} \mr^a 
        + \sqrt{\aout} \angavg{ \sigma^a }_{\rm r}
        \implies 
        \mr^a = \angavg{ \sigma^a }_{\rm r}
        \\
        (\hmc^a) \qquad 0 &= 
        - \sqrt{\frac{1}{\aout}} \mr^a 
        + \sqrt{\frac{1}{\aout}} \angavg{ \tau^a }_{\rm c}
        \implies 
        \mc^a = \angavg{ \tau^a }_{\rm c}
        \\
        (\hqr^{ab}) \qquad 0 &= 
        - \sqrt{\aout} \qr^{ab} 
        + \sqrt{\aout} \angavg{ \sigma^a \sigma^b }_{\rm r}
        \implies 
        \qr^{ab} = \angavg{ \sigma^a \sigma^b }_{\rm r}
        \\
        (\hqc^{ab}) \qquad 0 &= 
        - \sqrt{\frac{1}{\aout}} \qc^a 
        + \sqrt{\frac{1}{\aout}} \angavg{ \tau^a \tau^b }_{\rm c}
        \implies 
        \qc^{ab} = \angavg{ \tau^a \tau^b  }_{\rm c}
    \end{split}
\end{equation}
where
\begin{equation}
    \angavg{ \sigma^a }_{\rm r} = 
    \frac
    {\Tr_{\sigma} \sigma^a \exp \left( \sum_{a} \hmr^a \sigma^a + \sum_{a < b} \hqr^{ab} \sigma^a \sigma^b \right)}
    {\Tr_{\sigma} \exp \left( \sum_{a} \hmr^a \sigma^a + \sum_{a < b} \hqr^{ab} \sigma^a \sigma^b \right)}
\end{equation}
and similarly for the other averages.
The system of state equations must be solved (in the analytic continuation limit of zero replicas $n\to 0$ as usual for replica computations) for the variables $(\hr, \hc, \hmr, \hmc, \qr, \qc, \hqr, \hqc)$ at fixed $\mr$ and $\mc = \ain / \aout \mr$.
Notice that the magnetisation constraint implies replica symmetry at the level of the magnetisation.

\subsection{The Square MAS of a square random matrix case}

For the "Square MAS of a square random matrix" problem $\ain = \aout = 1$ implying $\mr = \mc$.
It is immediate to see that one solution of the state equations is given by
$\hr = \hc$, $\hmr = \hmc$, $\qr = \qc$ and $\hqr = \hqc$, giving the reduced system of equations
(we are using the implied replica symmetry at the level of the magnetisation here)
\begin{equation}
    \begin{split}
        m
        &=
        \frac{2}{\beta^2} \left(\hm - \beta h\right)
        \\
        q^{ab}
        &= \frac{1}{\beta^2} \hq^{ab} 
        \\
        m &= \angavg{ \sigma^a }
        \\
        q^{ab} &= \angavg{ \sigma^a \sigma^b }
    \end{split}
\end{equation}
which can be further reduced to a set of two equations to be solved for $h$ and $q^{ab}$
\begin{equation}
    \begin{split}
        m &= \angavg{ \sigma^a }
        \\
        q^{ab} &= \angavg{ \sigma^a \sigma^b }
    \end{split}
\end{equation}
where
\begin{equation}
    \angavg{ f(\sigma) } = 
    \frac
    {\Tr_{\sigma} f(\sigma) \exp \left( \left(\beta h + \frac{\beta^2}{2} m\right) \sum_{a} \sigma^a + \sum_{a < b} \beta^2 q^{ab} \sigma^a \sigma^b \right)}
    {\Tr_{\sigma} \exp \left( \left(\beta h + \frac{\beta^2}{2} m\right) \sum_{a} \sigma^a + \sum_{a < b} \beta^2 q^{ab} \sigma^a \sigma^b \right)} \, .
\end{equation}

Notice that the state equations obtained are exactly the same as those obtained for the "Principal MAS of a symmetric random matrix" problem.

The variational free entropy reads ($n$ is the number of replicas)
\begin{equation}
    \begin{split}
        \Phi &= 
        - \frac{\beta^2}{2} n m^2
        - \beta^2 \sum_{a < b} (q^{ab})^2 
        + 2 
        \log \Tr_{\sigma} \exp \left( \left(\beta h + \frac{\beta^2}{2} m\right) \sum_{a} \sigma^a + \sum_{a < b} \beta^2 q^{ab} \sigma^a \sigma^b \right) \, ,
    \end{split}
\end{equation}
the energy density per-replica reads
\begin{equation}
    \begin{split}
        e &=
        \del_\beta( \frac{1}{n} \Phi - 2 \beta h m ) 
        \\&=
        - 2 h m
        - \beta m^2
        - 2 \beta \frac{1}{n} \sum_{a < b} (q^{ab})^2 
        + \frac{2}{n} 
        \angavg{ (h + \beta m) \sum_a \sigma^a + 2 \sum_{a < b} \beta q^{ab} \sigma^a \sigma^b}
        \\&=
        - 2 h m
        - \beta m^2
        - 2 \beta \frac{1}{n} \sum_{a < b} (q^{ab})^2 
        + 2 (h + \beta m) m
        + 4 \beta \frac{1}{n} \sum_{a < b} (q^{ab})^2
        \\&=
        \beta m^2
        + 2 \beta \frac{1}{n} \sum_{a < b} (q^{ab})^2
    \end{split}
\end{equation}
and the average intensive submatrix average (per-replica) is
\begin{equation}
    \begin{split}
        a 
        = \frac{1}{m^2} e 
        = \beta + 2 \frac{\beta}{m^2} \frac{1}{n} \sum_{a < b} (q^{ab})^2 \, .
    \end{split}
\end{equation}

To compare with the "Principal MAS of a symmetric MAS" model, consider the RS ansatz for the overlap order parameter $q^{ab} = q$, giving at leading order for small number of replicas $n$
\begin{equation}
    \begin{split}
        a 
        = \frac{1}{m^2} e 
        = \frac{\beta}{m^2} \left(m^2 - q^2 \right) \, ,
    \end{split}
\end{equation}
which coincides with the RS submatrix average \eqref{eq.energy}.
Similarly one could check that the free entropy and the average submatrix average of the two problems (Square MAS and Principal MAS) coincides at all level of replica symmetry breaking.

\end{document}